\begin{document}
\title{Monoid automata for displacement context-free languages}
\author{Alexey Sorokin \inst{1,2}}
\institute{Moscow State University, Faculty of Mathematics and Mechanics
\and Moscow Institute of Physics and Technology, \\Faculty of Innovations and High Technologies}
\maketitle
\begin{abstract}
In 2007 Kambites presented an algebraic interpretation of Chomsky-Sch\"{u}tzenberger theorem for context-free languages. We give an interpretation of the corresponding theorem for the class of displacement context-free languages which are equivalent to well-nested multiple context-free languages. We also obtain a characterization of $k$-displacement context-free languages in terms of monoid automata and show how such automata can be simulated on two stacks. We introduce the simultaneous two-stack automata and compare different variants of its definition. All the definitions considered are shown to be equivalent basing on the geometric interpretation of memory operations of these automata.
\end{abstract}

\newcommand{\inbr}[1]{\{ #1 \}}
\newcommand{\kav}[1]{``#1''}
\newcommand{\inang}[1]{\langle #1 \rangle}
\newcommand{\hide}[1]{}

\newcommand{\aar}{\mathrm{ar}}
\newcommand{\Bb}{\mathcal{B}}
\newcommand{\Aa}{\mathcal{A}}

\section{Introduction}
In last decades the theory of monoid automata attracts a great interest both from the specialists in the theory of formal languages and algebra. The first are looking for a fine algebraic characterization of formal languages, which simplifies studying their properties and shows known theoretical facts in a wider scope. The algebraists are interested in the questions of effective computations in groups and semigroups where different variants of automata can be useful. Also the theory of monoid automata has some connections with the combinatorial group theory, e.g. in studying word problems for groups. For a more detailed survey and references see \cite{Kambites} or \cite{Zetzsche2011}.

A monoid automaton (or valence automaton) is a finite automaton augmented with a register storing an element of a particular monoid. Each transition of the automaton multiplies the current element in the memory by the monoid element associated with this transition. The automaton accepts a word if it reaches a final state after reading the word with the monoid unit in the register. The usage of the memory register allows to recognize more complex languages then the usual automata do. Evidently, the recognizing power essentially depends from the monoid serving as the register.

There is a straightforward approach to present a characterization of a language family in case there exists some another formal model of memory and computation for processing it. Assume that there is no content in the memory before reading the word and the memory storage should also be empty in the end in case the word is recognized. This condition holds for standard models of computation such as pushdown automata or embedded pushdown automata as well as for many other models. Consider the composition of all the operations executed during the successful computation, it obviously equals the identity element. So in this case it suffices to consider the monoid of admissible memory operations to provide a monoid automaton for a given family of languages.

For example, for the family of context-free languages every admissible operation is a composition of pushing and popping some symbols from the stack. The monoid of such operations is just the polycyclic monoid $\Bb_n = \inbr{x_i, \overline{x}_i \mid 1 \leq i \leq n}^* / \{ x_i\overline{x}_i = 1 \mid 1 \leq i \leq n\}$ (\cite{NivatPerrot}) where $x_1, \ldots, x_n$ are the elements of the stack alphabet and the equality $x_i\overline{x}_i = 1$ reflects the fact that popping $x_i$ immediately after pushing it on the stack is the same as doing nothing. But if the computation model is more complicated such as embedded pushdown automata for tree-adjoining languages this approach is useless because we cannot recover the structure of the monoid of operations.

The alternative approach to solve this problem is given in the work of Kambites (\cite{Kambites}). He shows that in the case of context-free languages we may use the Chomsky-Sch\"{u}tzenberger theorem, which states that every context-free language is the rational transduction of the language of correct bracket sequences. By Kambites theorem it suffices to find a monoid with an identity language isomorphic to the set of correct bracket sequences and use its elements as memory contents. It is not very interesting in the case of context-free languages because such a monoid is unsurprisingly a polycyclic monoid but very useful in general since Chomsky-Sch\"{u}tzenberger theorem is known for different families of languages.

In our work we consider the family of displacement context-free languages (\cite{Sorokin}) which coincides with the family of well-nested multiple context-free languages. Some computer scientists offer them as a possible candidate to formalize the notion of mildly context-sensitive language (see \cite{Kanazawa2011}). Chomsky-Sch\"{u}tzenberger theorem for this family of languages is proved in \cite{Yoshinaka}. We use this theorem to give a characterization of displacement context-free languages in terms of monoid automata and then show how the element of the constructed monoid are interpreted as operations on two stacks.

The paper is organized as follows: first we recall the definition of a monoid automaton and present the Kambites theorem. Then we define the family of displacement context-free languages and formulate the Chomsky-Sch\"{u}t\-zen\-ber\-ger theorem for it. We interpret the multibracket sequences from this theorem as the identity language of some monoid which gives us the characterization of displacement context-free languages in terms of monoid automata. Afterwards we show that this monoid is isomorphic to a particular submonoid of the cartesian product of two polycyclic monoids which allows us to interpret its elements as operations on the pair of stacks. Then we study some variants of the obtained computation model and show that its recognizing power does not depend on the possibility to observe the contents of the stacks before executing the command and other minor modifications of its definition.

\section{Preliminaries}

\subsection{Monoid automata}
In this section we introduce the definitions and concepts which would be useful in the further. We expect the reader to be familiar with basic notions of formal languages theory, such as finite automata, rational transductions and context-free grammars, also some knowledge of semigroup theory is required. For necessary information refer to any textbook on formal languages theory, such as \cite{Salomaa}, also see \cite{Berstel} and \cite{Lallement} for the introduction into the theory of rational transductions and theory of semigroups respectively. In this section we focus the attention on monoid automata and its interconnections with other objects of formal languages theory.

\begin{definition}
A monoid automaton ($M$-automaton) over the alphabet $\Sigma$ is a tuple $\mathcal{A} = \inang{ Q, \Sigma, M, P, q_0, F }$ where $Q$ is a finite set of states, $\Sigma$ is a finite alphabet, $M$ is a partial monoid with the identity $1$, $q_0 \in Q$ is an initial state, $F \subseteq Q$ is the set of final states and $P \in Q \times M \times (\Sigma \cup \epsilon) \times Q$ is a set of transitions.
\end{definition}

Just in the case of finite automata the notion of a label can be extended from edges to paths in the automaton. The only difference is that we replace mere concatenation by the multiplication operation of the monoid. According to this definition, the usual finite automata are $\mathrm{1}$-automata. Note that in all the cases we consider nondeterministic automata, which means we allow multiple moves with the same label in one state.
Also note that monoid automata are blind in the sense that they do not take the current element in the memory into account before multiplying it by the element associated with an edge.

\begin{definition}
A word $w \in \Sigma^*$ is accepted by the $M$-automaton $\!\mathcal{A}\!=\!\inang{\!Q, \Sigma, M, P, \linebreak q_0, F }$ iff there is a state $q \in Q$ such that the pair $\inang{ 1, w }$ labels some path from $q_0$ to $q$. The language recognized by the automaton $\mathcal{A}$ is denoted by $L(\mathcal{A})$.
\end{definition}

\begin{example}
 Let $S_1$ be a monoid with the generators $\{\alpha, \overline{\alpha}\}$ and the defining relation $\alpha \circ_1 \overline{\alpha} = 1$ and $S_2$ be a monoid with the generators $\{\beta, \overline{\beta}\}$ and the defining relation $\beta \circ_2 \overline{\beta} = 1$. Then the language $\{a^n b^n c^n \mid n \in \bbbn^+\}$ is recognized by an automaton $\mathcal{\alpha} = \inang{ \{q_0, q_1, q_2\}, \{a, b, c\}, S_1 \times S_2, P, q_0, \{q_2\} }$ where \linebreak $P = \{\inang{ q_0, \inang{ \alpha, 1 }, a, q_0}, \: \inang{ q_0, \inang{ \overline{\alpha}, \beta }, b, q_1}, \: \inang{ q_1, \inang{ \overline{\alpha}, \beta }, b, q_1}, \: \inang{ q_1, \inang{ 1, \overline{\beta} }, c, q_2}, \linebreak\inang{ q_2, \inang{ 1, \overline{\beta} }, c, q_2}\}$.
\end{example}

Let $M$ be a finitely generated monoid and $X$ be its system of generators. The identity language of $M$ consists of all the words in $X^*$ that represent identity. The proposition below enlightens the connection between monoid automata and finite transducers. It entails that the class of languages recognized by $M$-automata is closed under rational transductions for any finitely generated monoid $M$.

\begin{proposition}[Kambites, 2007]
The following conditions are equivalent:
\begin{enumerate}
\item $L$ is accepted by an $M$-automaton.
\item $L$ is a rational transduction of the identity language of $M$ with respect to some finite generating set $X$.
\item $L$ is a rational transduction of the identity language of $M$ with respect to every finite generating set $X$.
\end{enumerate}
\end{proposition}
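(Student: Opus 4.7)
The plan is to establish the cycle $(3) \Rightarrow (2) \Rightarrow (1) \Rightarrow (3)$. The implication $(3) \Rightarrow (2)$ is immediate, since (3) is nothing but (2) quantified over all finite generating sets; picking any single $X$ yields (2). The remaining two implications are the content of the proposition, and both are essentially constructive: one translates between finite transducers whose input alphabet is a generating set $X$ of $M$ and $M$-automata whose memory labels are letters of $X$.

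For $(2) \Rightarrow (1)$, fix a finite generating set $X$, write $I_X \subseteq X^*$ for the identity language, and let $T$ be a finite transducer realising the rational transduction $\tau$ with $\tau(I_X) = L$. I would build an $M$-automaton $\Aa$ on the state set of $T$ by turning each transition of $T$ that reads $x \in X$ and outputs $u \in \Sigma^*$ into a sequence of edges (splitting $u$ letter by letter using fresh intermediate states) whose total memory contribution is the element $x \in M$ and whose combined input is $u$; $\epsilon$-input transitions of $T$ become edges with memory element $1$. A successful run of $T$ on some $v \in I_X$ outputting $w \in L$ then corresponds precisely to a path in $\Aa$ that reads $w$ and whose accumulated memory element is the product of the letters of $v$, which equals $1$ in $M$ by the choice of $v$. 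Therefore $L(\Aa) = L$.

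For $(1) \Rightarrow (3)$, let $\Aa = \inang{Q, \Sigma, M, P, q_0, F}$ recognise $L$ and let $X$ be any finite generating set of $M$. For each transition $(q, m, a, q') \in P$, choose once and for all a factorisation $m = x_1 \cdots x_k$ with $x_i \in X$ (with $k = 0$ when $m = 1$). I would construct a finite transducer $T$ on $Q$ together with auxiliary intermediate states: each such transition is replaced by a length-$k$ (or length-$1$ if $k=0$) path whose $i$-th edge reads the letter $x_i \in X$ and outputs $a$ on exactly one of the edges and $\epsilon$ elsewhere. Then runs of $T$ from $q_0$ to $F$ are in bijection with runs of $\Aa$, and $T$ realises a rational transduction $\tau$ sending a word $u \in X^*$ to $w \in \Sigma^*$ exactly when $\Aa$ has a run on $w$ whose memory product spells $u$ letter by letter. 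Restricting the domain to $I_X$ keeps exactly those runs whose memory product equals $1$, i.e.\ the accepting runs, so $L = \tau(I_X)$ for the arbitrarily chosen $X$.

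The main subtlety is that $M$ is only a partial monoid, so one must be careful that intermediate products appearing when one multiplies letters of $X$ one at a time may not be defined in $M$ itself. I would sidestep this by treating the register as a formal word over $X$ during the computation and evaluating to $1$ only at the end via the acceptance condition, which is exactly the semantics forced by viewing the identity language $I_X$ as a subset of the free monoid $X^*$; the identification with $M$-memory is used only to compare the final value with $1$. All other verifications reduce to routine checks that splitting and re-grouping of edges preserves the labels of paths.
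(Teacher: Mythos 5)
The paper states this proposition without proof, simply citing Kambites (2007), so there is no in-paper argument to compare against; your cycle $(3)\Rightarrow(2)\Rightarrow(1)\Rightarrow(3)$ with the two transducer/automaton simulations is exactly the standard argument and is correct. The only cosmetic wrinkle is the $k=0$ case in $(1)\Rightarrow(3)$, where the single replacement edge should read $\epsilon$ rather than a letter of $X$, and your remark about deferring evaluation of partial products to the acceptance test is the right way to handle the fact that $M$ is only a partial monoid.
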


This result of Kambites offers a powerful method of characterizing a particular family of languages in terms of monoid automata, if this family is closed under rational transductions. To prove that all languages recognized by $M$-automata are, for an instance, context-free, it suffices to construct a context-free grammar for the identity language of $M$. To prove the opposite inclusion one may either present an automata characterization of the language family (pushdown automata give such characterization for context-free languages) and then translate it to the language of monoid automata or find some ``typical'' language in the family, such that all other languages are its images under rational transductions, and show that this language recognized by an $M$-automaton for the monoid $M$ under consideration.

For context-free languages it is reasonable to use Chomsky-Sch\"{u}tzenberger theorem. The Dyck language of rank $n$ is a language containing all correct bracket sequences on $n$ types of brackets $a_1, \overline{a}_1, \ldots, a_n, \overline{a}_n$. It is generated by a context-free grammar with the rules $S \to a_i S \overline{a}_i S, \: S \to \epsilon$, where $i$ ranges from $1$ to $n$. The next theorem shows that it is in some sense ``typical'' among the context-free languages:

\begin{theorem}[Chomsky-Sch\"{u}tzenberger, \cite{Chomsky}]
A language $L$ is context-free if and only if it is a rational transduction of the language $D_n$ for some $n \in \bbbn$.
\end{theorem}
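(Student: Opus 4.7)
The plan is to prove the two directions separately. For the \emph{if} direction, I would first observe that the Dyck language $D_n$ is context-free, since the grammar with productions $S \to a_i S \overline{a}_i S$ for $1 \leq i \leq n$ and $S \to \epsilon$ given just above generates it. The class of context-free languages is closed under rational transductions, as one can recall by the standard construction: given a pushdown automaton for a context-free $L$ and a finite transducer $T$, run them in parallel, letting the output tape of $T$ serve as the input tape of the new pushdown automaton. Hence every rational transduction of $D_n$ is context-free.

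For the \emph{only if} direction, the key idea is to encode derivations of a context-free grammar as bracket sequences. I would start with a CFG $G = \inang{N, \Sigma, P, S}$ in Chomsky normal form, so that every production is either $A \to BC$ or $A \to a$. For each production $p \in P$ introduce a pair of matched brackets $a_p, \overline{a}_p$; this determines the rank $n = |P|$ of the Dyck language. Every derivation tree of $G$ is encoded recursively: write $a_p$ for the root production $p$, then the encodings of its subtrees in left-to-right order, then $\overline{a}_p$. The set $E$ of all such encodings sits inside $D_n$, and in fact $E = D_n \cap R$ for a regular language $R$ that enforces local compatibility --- whenever $a_p$ is opened for a production with right-hand side $X_1 X_2$ (or a terminal), the immediately nested children must be brackets $a_{p_1}, a_{p_2}$ whose productions have left-hand sides $X_1, X_2$ (or the opening and closing bracket must be adjacent in the terminal case), and the outermost bracket must come from a production with left-hand side $S$. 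Since $R$ constrains only the nesting level of adjacent brackets, it can be recognized by a finite automaton that remembers only the last open bracket, which is tracked inside the Dyck structure.

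Finally I would define the homomorphism $h$ which maps $a_p$ to the letter $a$ for each terminal-producing rule $A \to a$, and sends every other opening or closing bracket to $\epsilon$. A routine induction on the structure of parse trees shows that $h$ restricted to $E$ is exactly the yield map, so $L(G) = h(D_n \cap R)$. Since intersection with a regular language followed by an alphabetic homomorphism is a rational transduction, this exhibits $L(G)$ as a rational transduction of $D_n$, finishing the proof.

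The main obstacle, I expect, is the careful design of the regular constraint $R$ together with the bijectivity verification between parse trees of $G$ and words of $D_n \cap R$. The construction is made manageable by normalizing $G$ first; without Chomsky normal form, $\epsilon$-rules and long right-hand sides would force a more elaborate state space for $R$, even though the overall strategy would remain unchanged.
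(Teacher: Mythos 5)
The paper does not actually prove this theorem: it quotes it and refers the reader to Lallement's book, offering only the informal remark that it reflects the nesting structure of derivation subtrees. So your proposal cannot be compared against an in-paper argument; what you have written is the standard textbook proof, and your \emph{if} direction (Dyck languages are context-free, context-free languages are closed under rational transduction) and your final reduction ($h(D_n\cap R)$ is a rational transduction of $D_n$) are both correct.

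There is, however, a genuine gap in the \emph{only if} direction as you describe it. With one bracket pair $a_p,\overline{a}_p$ per production and a constraint $R$ that only inspects \emph{adjacent} letters, you cannot enforce that each opened bracket has exactly the right immediately nested children. For a production $p\colon A\to BB$, the word $a_p\,u\,\overline{a}_p$ with a single child $u$, or with three children, passes every adjacency test that the intended two-child word passes; more generally, spurious middle children can be licensed by adjacency conditions borrowed from \emph{other} productions. Your remedy --- a finite automaton that ``remembers the last open bracket'' --- is not available: the currently unmatched opening bracket at a given position is stack information, not regular information, and letting the Dyck intersection ``track'' it for you is precisely the circularity the construction must avoid. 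The standard fix is to refine the alphabet: introduce one bracket pair per \emph{occurrence} of a symbol in the right-hand side of a production (so $p\colon A\to BC$ contributes pairs indexed by $(p,1)$ and $(p,2)$, and the tree is encoded as $a_{p,1}\,\mathrm{enc}(B)\,\overline{a}_{p,1}\,a_{p,2}\,\mathrm{enc}(C)\,\overline{a}_{p,2}$). With that alphabet, purely local conditions on adjacent pairs of letters do characterize the valid encodings inside the Dyck language, and the rest of your argument goes through unchanged. Your instinct that the design of $R$ is the delicate point is exactly right; the Chomsky normal form assumption alone does not discharge it.
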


The proof of this theorem can be found, for example, in \cite{Lallement}. Informally, the statement of the theorem roughly corresponds to the fact that the subtrees of the derivation tree are either embedded one into another or do not intersect. Now we want to show that in fact this theorem is about monoid automata.

Let $X$ be a finite set of generators, then for every element $x \in X$ we define two operators $P_x$ and $Q_x$ on the free monoid $X^*$. $P_x$ transforms a string $w$ to the string $wx$ simulating the push operation. $Q_x$ conversely transforms a string of the form $wx$ to the string $w$ and is a right inverse of $P_x$. The set of all $P_x, Q_x$ is extended to the submonoid $\mathcal{P}_X$ of the monoid of partial functions from $X^*$ to $X^*$. This monoid was first studied in the work of \cite{NivatPerrot} and plays a great role in the structural theory of semigroups.

Polycyclic monoid automata obviously are capable to perform the operations ``push'' and ``pop'' of usual pushdown automata, which suffices to simulate its work. Note that if we refer to the elements of $X$ as types of brackets, then $P_x$ naturally corresponds to the opening bracket, as well as $Q_x$ to the closing. With respect to this translation the identity language of the monoid $\mathcal{P}_X$ is exactly the set of correct bracket sequences. Summarizing, the following theorem holds:

\begin{theorem}[Kambites, 2007]\label{Kambites}
The following conditions are equivalent:
\begin{enumerate}
\item The language $L$ is context-free.
\item The language $L$ is recognized by some polycyclic monoid automata.
\end{enumerate}
\end{theorem}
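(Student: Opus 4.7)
The plan is to chain together the two results that the paper has just set up, with one short semigroup-theoretic verification in the middle. By the earlier Kambites proposition, a language $L$ is recognized by a $\mathcal{P}_X$-automaton if and only if $L$ is a rational transduction of the identity language of $\mathcal{P}_X$ with respect to some (equivalently, every) finite generating set. By the Chomsky--Sch\"utzenberger theorem, $L$ is context-free if and only if $L$ is a rational transduction of the Dyck language $D_n$ for some $n$. So it suffices to match up the two ``typical'' languages: the identity language of $\mathcal{P}_X$ taken with respect to the natural generating set $\{P_x, Q_x \mid x \in X\}$ is precisely the Dyck language $D_{|X|}$ on $|X|$ pairs of brackets, under the correspondence $P_x \mapsto a_x$, $Q_x \mapsto \overline{a}_x$.

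To verify this matching, I would argue in both directions. For the easy inclusion, an induction on the structure of a bracket sequence shows that every word in $D_{|X|}$ acts as the identity on $X^*$: the generating relation $P_x \circ Q_x = 1$ handles the innermost matched pair, and the remaining word is again a correct bracket sequence. For the converse, I would argue that any word $w$ over $\{P_x, Q_x\}$ that is not a correct bracket sequence fails to represent the identity, by exhibiting a string $u \in X^*$ on which $w$ acts non-trivially (or is undefined). Concretely, if $w$ contains a ``mismatch'' $Q_y P_x$ or an unmatched closing generator $Q_y$, one can choose a suitable $u$ on which the action diverges from the identity, using that $Q_y$ is only a partial right inverse of $P_y$ and does nothing resembling the identity on strings ending in anything other than $y$. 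This is the step that requires the most care, though it is entirely standard and essentially reproduces the classical description of the identity language of the polycyclic monoid due to Nivat and Perrot cited earlier in the text.

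Once this identification is established, the theorem follows in two short implications. For $(1) \Rightarrow (2)$, if $L$ is context-free, Chomsky--Sch\"utzenberger provides $n$ and a rational transduction $\tau$ with $L = \tau(D_n)$; taking $X$ with $|X| = n$, the Dyck language is the identity language of $\mathcal{P}_X$, so by the Kambites proposition $L$ is recognized by a $\mathcal{P}_X$-automaton. For $(2) \Rightarrow (1)$, if $L$ is recognized by a $\mathcal{P}_X$-automaton, the Kambites proposition gives $L$ as a rational transduction of the identity language of $\mathcal{P}_X$, which is $D_{|X|}$, and then Chomsky--Sch\"utzenberger yields that $L$ is context-free.

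The only real obstacle is the identification of the identity language of $\mathcal{P}_X$ with the Dyck language, and even that is little more than unwinding the defining relation $P_x \circ Q_x = 1$ together with the partiality of $Q_x$. Everything else is a formal composition of the two theorems recalled above.
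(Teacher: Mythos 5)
Your proposal is correct and follows essentially the same route as the paper: identify the identity language of $\mathcal{P}_X$ with the Dyck language $D_{|X|}$ via $P_x \mapsto a_x$, $Q_x \mapsto \overline{a}_x$, then compose the earlier Kambites proposition on rational transductions of identity languages with the Chomsky--Sch\"utzenberger theorem. The only difference is that you spell out the verification that the identity language is exactly $D_{|X|}$, which the paper merely asserts; your sketch of that verification is sound.
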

Note that this theorem can also be proved directly without any references to Chomsky-Sch\"{u}tzenberger theorem, just in the same way like the equivalence between context-free grammars and pushdown automata is established.

\subsection{Displacement context-free grammars}
In this section we define the generalization of context-free grammars, the displacement context-free grammars (DCFGs), introduced in \cite{Sorokin}. They are just another realization of well-nested multiple context-free grammars (wMCFGs) but are more convenient for the purposes of our work. It is worth noting that wMCFGs are thoroughly studied in last years, for example in \cite{Kanazawa2009} or \cite{Kanazawa2011}.

Let $\Sigma$ be a finite alphabet and $1$ be a distinguished separator, $1 \notin \Sigma$. For every word $w \in (\Sigma \cup 1)^*$ we define its rank $rk(w) = |w|_1$. We define the $j$-th intercalation operation $+_j$ which consists in replacing the $j$-th separator in its first argument by its second argument. For example, $a1b11d+_2 c1c = a1bc1c1d$.

Let $k$ be a natural number and $N$ be the set of nonterminals. The function $rk \colon N \to \overline{0,k}$ assigns every element of $N$ its rank. Let $Op_k = \{ \cdot, +_1, \ldots, +_k\}$ be the set of binary operation symbols, then the ranked set of correct terms $Tm_k(N, \Sigma)$ is defined in the following way (we write simply $Tm_k$ when it causes no confusion):
\begin{enumerate}
    \item $N \subset Tm_k(N, \Sigma)$,
    \item $\Sigma^* \subset Tm_k(N, \Sigma), \: \forall w \in \Sigma^* \: rk(w) = 0$,
    \item $1 \in Tm_k, \: rk(1) = 1$,
    \item If $A, B \in Tm_k$ and $rk(A) + rk(B) \leq k$, then $(A \cdot B) \in Tm_k, \\ rk(A \cdot B) = rk(A) + rk(B)$.
    \item If $j \leq k, \: A, B \in Tm_k, \: rk(A) + rk(B) \leq k + 1, \: rk(A) \geq j$, then \\ $(A +_j B) \in Tm_k, \: rk(A \cdot B) = rk(A) + rk(B) - 1$.
\end{enumerate}

We will often omit the symbol of concatenation and assume that concatenation has greater priority then intercalation, so $Ab +_2 cD$ means $(A \cdot b) +_2 (c \cdot D)$. The set of correct terms includes all the terms of sort $k$ or less that also do not contain subterms of rank greater than $k$. The set of ground terms $GrTm_k(\Sigma)$ consists of all terms that have only elements of $\Sigma^* \cup \{1\}$ in its leafs. For every ground term we can calculate its value interpreting elements of $\Sigma \cup 1$ as themselves and the operation symbols $\cdot$ and $+_j$ as concatenation and $j$-th intercalation respectively. Let $\nu \colon GrTm_k(\Sigma) \to (\Sigma \cup 1)^*$ be the value function, then in this interpretation $rk(\alpha)$ equals $rk(\nu(\alpha))$ for every ground term $\alpha$. If we assign every nonterminal of rank $j$ an arbitrary word of rank $j$, the same interpretation holds for non-ground terms either.

\begin{definition}
A $k$-displacement context-free grammar ($k$-DCFG) is a quadruple $G = \inang{ N, \Sigma, P, S }$, where $\Sigma$ is a finite alphabet, $N$ is a finite ranked set of nonterminals and $\Sigma \cap N = \emptyset, S \in N$ is a start symbol such that $rk(S) = 0$ and $P$ is a set of rules of the form $A \to \alpha$. Here $A$ is a nonterminal, $\alpha$ is a term from $Tm_k(N, \Sigma)$, such that $rk(A) = rk(\alpha)$.
\end{definition}

A context $C[]$ is simply a term from $Tm_k$ with a distinguished placeholder $\#$ instead one of its leafs. If $\beta$ is a term, then $C[\beta]$ denotes the result of replacing $\#$ by $\beta$ (in case the resulting term is in $Tm_k$). For example, $C[] = b1 +_1 (a \cdot \#)$ is a context and $C[A \cdot c] = b1 +_1 aAc$.

\begin{definition}
The derivability relation $\vdash_G \in N \times Tm_k$ associated with the grammar $G$ is the smallest reflexive transitive relation such that the facts $(B \to \beta) \in P$ and $A \vdash C[B]$ imply that $A \vdash C[\beta]$ for any context $C$. Let the set of words derivable from $A \in N$ be $L_G(A) = \{ \nu(\alpha) \mid A \vdash_G \alpha, \: \alpha \in GrTm_k\}$. Then $L(G) = L_G(S)$.
\end{definition}

\begin{example}
Let the $i$-DCFG $G_i$ be the grammar $G_i = \inang{ \{S, T\}, \{a, b\}, P_i, S}$. Here $P_i$ is the following set of rules (the notation $A \to \alpha | \beta$ stands for \linebreak $A \to \alpha, A \to \beta)$:
$$\begin{array}{rcl}
    S & \to & \underbrace{(\ldots(}_{i - 1\mbox{ times}} aT +_1 a) + \ldots) +_1 a \; | \; \underbrace{(\ldots(}_{i - 1\mbox{ times}} bT +_1 b) + \ldots) +_1 b\\
    T & \to & \underbrace{(\ldots(}_{i - 1\mbox{ times}} aT +_1 1a) + \ldots) +_{i} 1a \; | \; \underbrace{(\ldots(}_{i - 1\mbox{ times}} bT +_1 1b) + \ldots) +_{i} 1b \; | \; 1^i
\end{array}$$
The grammar $G_i$ generates the language $\{ w^{i + 1} \mid w \in \{a, b\}^+\}$. For example, this is the derivation of the word $(aba)^3$ in $G_2$: $S \to (aT +_1 a) +_1 a \to (a((bT +_1 1b) +_2 1b) +_1 a) +_1 a \to (a((b((aT +_1 1a) +_2 1a) +_1 1b) +_2 1b)+_1 a) +_1 a \to (a((b((a11 +_1 1a) +_2 1a) +_1 1b) +_2 1b)+_1 a) +_1 a = (a(b(a1a1a +_1 1b) +_2 1b)+_1 a) +_1 a = (aba1ba1ba +_1 a) +_1 a = abaabaaba$.
\end{example}

We have already noted that $k$-DCFGs are equivalent to well-nested $(k+1)$-multiple context free grammars. In the case of $k = 1$ the intercalation operation is simply the wrapping operation of head grammars (\cite{Pollard1984}, \cite{Roach1987}), which are equivalent to tree adjoining grammars (TAGs), as proved in \cite{Vijay-SchankerWeirJoshi1986}. We will not recall the definitions of these classes due to the lack of space. The interested reader may consult \cite{Seki} and \cite{Kanazawa2009} for the definitions of wMCFGs and \cite{JoshiSchabes} for the definition of TAGs.

Comparing the definition of DCFG with the definition of wMCFG it is necessary to mention that wMCFGs does not impose any condition on the rank of subterms which are well-nested substructures of the righthand side of the rule in terms of wMCFGs. However, this restriction can be also removed in the case of DCFGs: it is possible to show that for every term $\alpha$ which do not contain leaves of sort greater then $k$ and is of sort $k$ itself an equivalent term $\beta \in Tm_k(N, \Sigma)$ can be constructed. Equivalence in this case means that both terms have the same value under arbitrary assignment of values to nonterminals. We omit the details of the proof. So the condition on subterm ranks is redundant in general but we leave it for the sake of consistence.

\subsection{Chomsky-Sch\"{u}tzenberger theorem and correct multibracket sequences}

To present the Chomsky-Sch\"{u}tzenberger theorem we should replace brackets with multibrackets. Let $X$ be a ranked alphabet with the arity function $\rho \colon X \to \overline{1, L}$, where $L$ is a positive integer called the rank of $X$. We define the set of multibrackets $B(X) = \{ x^j, \overline{x}^j \mid x \in X, \: j \in \overline{1, \rho(x)}\}$. Let $w[j]$ denotes $j$-th letter in a word $w \in B(X)^*$ (the numeration starts with zero) and $Pos(w) = \{0, 1,\ldots, |w| - 1\}$.

\begin{definition}
$w \in B(X)^*$ is called a correct multibracket sequence if the set $Pos(w)$ can be partitioned into disjoint sets $H_1, \ldots, H_m$ such that: \newline
1) Every $H_t$ contains an even number of elements. If $i_1 < j_1 < i_2 < \ldots < i_r < j_r$ are the elements of some set $H_t$,then there exists an element $x \in X$ such that $r = \rho(x)$ and for every $l \leq r$ it holds
that $H[i_l] = x^l, H[j_l] = \overline{x}^l$. \newline
2) If $H$ and $H'$ are two sets in the partition and $i_1 < j_1 < \ldots < i_r < j_r$ and $i'_1 < j'_1 < \ldots < i'_s < j'_s$ are their elements, then one of the following alternatives holds:
\begin{itemize}
    \item $j_r < i'_1$ or $j'_s < i_1$,
    \item There exists $l \in \overline{1, r - 1}$ such that $j_l < i'_1 < j'_1 < \ldots < i'_s < j'_s < i_{l+1}$ or there exists $l' \in \overline{1, s-1}$ such that $j'_{l'} < i_1 < j_1 < \ldots < i_r < j_r < i'_{l'+1}$.
    \item For every $l \in \overline{1, r}$ there exists $l' \in \overline{1, s}$ such that $i'_{l'} < i_l < j_l < j'_{l'}$ or for every $l' \in \overline{1, s}$ there exists $l \in \overline{1, r}$ such that $i_l < i'_{l'} < j'_{l'} < j_l$.
\end{itemize}
\end{definition}

The generalized Dyck language $D(X)$ over the alphabet $X$ is the language of all correct multibracket sequences $w \in B(X)^*$. Informally, let the set $H$ in the partition consist of the positions $i_1 < j_1 < \ldots < i_s < j_s$. Let the elements of $H$ define a closed curve on the plane as it is shown on the figure below ($s = 3$), we refer to the set of such curves as the induced curves of the partition:
\begin{center}
\begin{picture}(300, 80)
\put(40, 40){$i_1$}\put(80, 40){$j_1$}\put(120, 40){$i_2$}\put(160, 40){$j_2$}\put(200, 40){$i_3$}\put(240, 40){$j_3$}
\put(43,50){\line(0,1){20}}\put(43,70){\line(1,0){41}}\put(84,70){\line(0,-1){20}}
\put(123,50){\line(0,1){20}}\put(123,70){\line(1,0){41}}\put(164,70){\line(0,-1){20}}
\put(203,50){\line(0,1){20}}\put(203,70){\line(1,0){41}}\put(244,70){\line(0,-1){20}}
\put(43,40){\line(0,-1){30}}\put(43,10){\line(1,0){201}}\put(244,10){\line(0,1){29}}
\put(84,39){\line(0,-1){19}}\put(84,20){\line(1,0){39}}\put(123,20){\line(0,1){20}}
\put(164,39){\line(0,-1){19}}\put(164,20){\line(1,0){39}}\put(203,20){\line(0,1){20}}
\end{picture}
\end{center}

Then $w$ is a correct multibracket sequence if it is possible to partite its set of positions in a way that the induced curves of this partition do not intersect. There is another geometrical intuition behind this definition: every set $H$ in the partition of correct multibracket sequence divides the sequence into its ``interior'' and ``exterior''. For any other set $H'$ in the partition there are four possibilities: $H'$ is in the interior of $H$; $H$ is in the interior of $H'$; $H'$ lies entirely in one of the intervals of the exterior of $H$; $H$ lies entirely in one of the intervals of the exterior of $H'$. Let $H$ consist of the elements $i_1 < j_1 < \ldots < i_s < j_s$ and $H'$ consist of $i'_1 < j'_1 < \ldots < i'_t < j'_t$. The picture below illustrates the possible variants of their mutual position ($s = 3$ and $t = 2$).

\begin{center}
\begin{picture}(347, 80)
\put(0, 40){$i_1$}\put(40, 40){$j_1$}\put(80, 40){$i_2$}\put(120, 40){$j_2$}\put(160, 40){$i_3$}\put(200, 40){$j_3$}
\put(3,50){\line(0,1){20}}\put(3,70){\line(1,0){41}}\put(44,70){\line(0,-1){20}}
\put(83,50){\line(0,1){20}}\put(83,70){\line(1,0){41}}\put(124,70){\line(0,-1){20}}
\put(163,50){\line(0,1){20}}\put(163,70){\line(1,0){41}}\put(204,70){\line(0,-1){20}}
\put(3,40){\line(0,-1){30}}\put(3,10){\line(1,0){201}}\put(204,10){\line(0,1){29}}
\put(44,39){\line(0,-1){19}}\put(44,20){\line(1,0){39}}\put(83,20){\line(0,1){20}}
\put(124,39){\line(0,-1){19}}\put(124,20){\line(1,0){39}}\put(163,20){\line(0,1){20}}

\put(220, 40){$i'_1$}\put(260, 40){$j'_1$}\put(300, 40){$i'_2$}\put(340, 40){$j'_2$}
\put(223, 50){\line(0,1){20}}\put(223,70){\line(1,0){41}}\put(264,70){\line(0,-1){20}}
\put(303, 50){\line(0,1){20}}\put(303,70){\line(1,0){41}}\put(344,70){\line(0,-1){20}}
\put(223,40){\line(0,-1){30}}\put(223,10){\line(1,0){121}}\put(344,10){\line(0,1){29}}
\put(264,39){\line(0,-1){19}}\put(264,20){\line(1,0){39}}\put(303,20){\line(0,1){20}}
\end{picture}

\begin{picture}(320, 80)
\put(0, 40){$i_1$}\put(40, 40){$j_1$}\put(200, 40){$i_2$}\put(240, 40){$j_2$}\put(280, 40){$i_3$}\put(320, 40){$j_3$}
\put(3,50){\line(0,1){25}}\put(3,75){\line(1,0){41}}\put(44,75){\line(0,-1){25}}
\put(203,50){\line(0,1){25}}\put(203,75){\line(1,0){41}}\put(244,75){\line(0,-1){25}}
\put(283,50){\line(0,1){25}}\put(283,75){\line(1,0){41}}\put(324,75){\line(0,-1){25}}
\put(3,40){\line(0,-1){40}}\put(3,0){\line(1,0){321}}\put(324,0){\line(0,1){39}}
\put(44,39){\line(0,-1){34}}\put(44,5){\line(1,0){159}}\put(203,5){\line(0,1){35}}
\put(244,39){\line(0,-1){34}}\put(244,5){\line(1,0){39}}\put(283,5){\line(0,1){35}}

\put(60, 40){$i'_1$}\put(100, 40){$j'_1$}\put(140, 40){$i'_2$}\put(180, 40){$j'_2$}
\put(63, 50){\line(0,1){25}}\put(63,75){\line(1,0){41}}\put(104,75){\line(0,-1){25}}
\put(143, 50){\line(0,1){25}}\put(143,75){\line(1,0){41}}\put(184,75){\line(0,-1){25}}
\put(63,40){\line(0,-1){25}}\put(63,15){\line(1,0){121}}\put(184,15){\line(0,1){24}}
\put(104,39){\line(0,-1){19}}\put(104,20){\line(1,0){39}}\put(143,20){\line(0,1){20}}
\end{picture}

\begin{picture}(300, 100)
\put(10, 40){$i_1$}\put(50, 40){$j_1$}\put(90, 40){$i_2$}\put(170, 40){$j_2$}\put(210, 40){$i_3$}\put(290, 40){$j_3$}
\put(13,50){\line(0,1){30}}\put(13,80){\line(1,0){41}}\put(54,80){\line(0,-1){30}}
\put(93,50){\line(0,1){30}}\put(93,80){\line(1,0){81}}\put(174,80){\line(0,-1){30}}
\put(213,50){\line(0,1){30}}\put(213,80){\line(1,0){81}}\put(294,80){\line(0,-1){30}}
\put(13,40){\line(0,-1){40}}\put(13,0){\line(1,0){281}}\put(294,0){\line(0,1){39}}
\put(54,39){\line(0,-1){19}}\put(54,20){\line(1,0){39}}\put(93,20){\line(0,1){20}}
\put(174,39){\line(0,-1){19}}\put(174,20){\line(1,0){39}}\put(213,20){\line(0,1){20}}

\put(110, 40){$i'_1$}\put(150, 40){$j'_1$}\put(230, 40){$i'_2$}\put(270, 40){$j'_2$}
\put(113, 50){\line(0,1){20}}\put(113,70){\line(1,0){41}}\put(154,70){\line(0,-1){20}}
\put(233, 50){\line(0,1){20}}\put(233,70){\line(1,0){41}}\put(274,70){\line(0,-1){20}}
\put(113,40){\line(0,-1){35}}\put(113,5){\line(1,0){161}}\put(274,5){\line(0,1){34}}
\put(154,39){\line(0,-1){24}}\put(154,15){\line(1,0){79}}\put(233,15){\line(0,1){24}}
\end{picture}

\end{center}

The next proposition offers $(rk(X) - 1)$-DCFG for $D(X)$, the proof follows from the definitions, so we omit it (we just reformulate the wMCFG-grammar from \cite{Yoshinaka} in terms of DCFGs):

\begin{proposition}
Let $X$ be the ranked alphabet of rank $L$, then the language of correct multibracket sequences over $X$ is generated the $(L-1)$-DCFG $G_{X} = \{ \{ S_i \mid i \in \overline{0,L-1}\}, B(X), P_{X}, S_0 \}$ where $P_X$ contains the following rules:
\begin{itemize}
    \item $S_{i + j} \to S_i S_j, \; i + j < L$,
    \item $S_{i + j - 1} \to S_i +_l S_j, \; i + j \leq L, \: l \leq i < L$,
    \item $S_{r}\!\to\!\!x^1\!\!\underbrace{(\ldots(}_{r\mbox{ times}}\!\!S_{r} +_1 (\overline{x}^1 1 x^2))\!+_2\! \ldots) +_{r} (\overline{x}^{r}1x^{r+1}))\overline{x}^{r+1}, \; x \in X, \: r = \rho(x) - 1$,
    \item $S_0 \to \epsilon, \; S_1 \to 1$.
\end{itemize}
\end{proposition}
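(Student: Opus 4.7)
The plan is to establish, for each $r \in \overline{0, L-1}$, the equality $L_{G_X}(S_r) = F_r(X)$, where $F_r(X)$ is the set of rank-$r$ ``fragments''---words $w = w_0 1 w_1 1 \cdots 1 w_r$ in $(B(X) \cup \{1\})^*$ whose virtual wrapping $y^1 w_0 \overline{y}^1 y^2 w_1 \overline{y}^2 \cdots y^{r+1} w_r \overline{y}^{r+1}$ by a fresh symbol $y$ of arity $r+1$ lies in $D(X \cup \{y\})$. Setting $r = 0$ reduces the claim to $F_0(X) = D(X)$. Both inclusions are proved by induction.

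The forward inclusion $L_{G_X}(S_r) \subseteq F_r(X)$ proceeds by induction on derivations. The axioms $S_0 \to \epsilon$ and $S_1 \to 1$ trivially produce fragments. For $S_{i+j} \to S_i S_j$, the virtual wrappings of the two operand-fragments are amalgamated into a single wrapping of arity $i+j+1$ whose interior segments accommodate the two sub-structures; for $S_{i+j-1} \to S_i +_l S_j$, the inserted fragment occupies one interior segment of the merged wrapping; for $S_r \to x^1(\cdots)\overline{x}^{r+1}$, the new multibracket family $H_x$ of arity $r+1$ nests compatibly inside the virtual $H_y$ of matching arity, which is exactly the containment case of the non-crossing condition.

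The reverse inclusion $F_r(X) \subseteq L_{G_X}(S_r)$ goes by strong induction on $|w|$. If $w$ has no multibrackets, then $w = 1^r$, derivable by iterating the concatenation rule on $S_1 \to 1$ (with $S_0 \to \epsilon$ for the $r = 0$ base). Otherwise, one analyzes the partition of $w^{*}$ and applies one of three rule types: if $w = u \cdot v$ for non-empty fragments $u, v$, use $S_{i+j} \to S_i S_j$ and invoke induction on $u$ and $v$; if instead $w = u +_l v$ for fragments $u, v$ with $u$ containing at least one separator, use the intercalation rule analogously; if neither such split exists, the leftmost multibracket $x^1$ of $w$ belongs to a family $H_x$ whose segments arrange $w$ in exactly the pattern produced by the wrapping rule, and applying $S_r \to x^1(\cdots)\overline{x}^{r+1}$ reduces the problem to a shorter rank-$(\rho(x)-1)$ inner fragment derivable by induction. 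The rank constraints $i + j \leq L$ and $i + j < L$ on intercalation and concatenation are automatic, since every intermediate fragment has rank at most $r \leq L - 1$.

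The principal obstacle lies in this case analysis: one must argue that every non-trivial fragment admits at least one of the three decomposition types, and verify that the extracted sub-words inherit valid wrappings from the partition of $w^{*}$. The non-crossing conditions on that partition---in particular the observation that no separator of $w$ can fall inside any $H_x$-segment, since this would force the virtual $H_y$ to cross $H_x$---provide exactly the structural constraints needed for this case analysis to succeed. Once this bookkeeping is set up, the assembly of the derivation from the rule applications is routine.
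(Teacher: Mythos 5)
The paper itself gives no proof of this proposition --- it is explicitly omitted with a pointer to Yoshinaka et al. --- so there is nothing to compare against line by line; your sketch has to be judged on its own merits. The architecture you choose, strengthening the statement to all ranks $r$ via the ``virtual wrapping'' by a fresh symbol $y$ of arity $r+1$ and running inductions in both directions, is the right one and matches the standard argument. The forward direction is essentially fine: the three amalgamation claims (concatenation, intercalation, wrapping) do hold, and the observation you isolate --- that a separator cannot sit inside a link of any family, because the corresponding $\overline{y}^m y^{m+1}$ pair of the virtual family would then cross that family --- is indeed the lemma that drives everything.

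The gap is in the reverse inclusion, and it is not just deferred bookkeeping. First, the induction on $|w|$ is broken. Take $x$ with $\rho(x)=2$ and $w = x^1\overline{x}^1 x^2\overline{x}^2 \in D(X)$, of rank $0$. No concatenation split exists (no proper prefix is a complete multibracket sequence, since the family of $x$ needs all four brackets), and the wrapping rule cannot produce $w$ directly because it forces a separator between $\overline{x}^1$ and $x^2$. The only available move is the intercalation $w = u +_1 v$ with $v = \epsilon$ and $u = x^1\overline{x}^1 1 x^2\overline{x}^2$ --- but then $|u| = |w|+1$, so strong induction on length cannot be invoked for $u$. This ``empty exterior gap'' case arises whenever two consecutive links of the leftmost family are adjacent; you need a finer measure, e.g.\ lexicographic in the number of multibrackets and $L-1-rk(w)$, using the fact that each such step raises the rank by one and the rank is bounded by $L-1$. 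Second, the exhaustiveness claim itself --- that if no concatenation split exists and no exterior gap of the leftmost family $H$ can be excised, then $H$ spans all of $w$, every gap of $H$ contains exactly one separator and nothing else, and hence $\rho(x)-1 = r$ so the wrapping rule applies with the correct arity --- is precisely the content of the proposition, and your sketch asserts it rather than derives it. The derivation is not hard (no family other than the virtual $H_y$ can contain $H$, since that would require an opening bracket left of position $1$ in $w^*$; hence every other family either nests inside a link of $H$, sits in an exterior gap of $H$, or lies wholly to the right of $H$, which yields your three cases), but it must be written out, together with the verification that the excised $v$ and the remainder $u$ are again fragments of the appropriate ranks and that the side condition $i+j\leq L$ follows from $rk(w)\leq L-1$.
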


Below we formulate the Chomsky-Sch\"{u}tzenberger theorem for the class of $k$-DCFGs. We omit the proof, since, as mentioned in \cite{Yoshinaka}, it can be recovered from the analogous theorem for the class of all MCFGs with natural modifications.

\begin{theorem}\label{Chomsky}
The language $L$ is a $k$-displacement context-free language if and only if it is a rational transduction of generalized Dyck language $D(X)$ for some alphabet $X$ of the rank $k + 1$.
\end{theorem}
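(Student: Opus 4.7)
The plan is to prove both implications. The easy direction, that every rational transduction of $D(X)$ with $rk(X) = k+1$ is a $k$-displacement context-free language, follows from the preceding proposition together with the closure of the class of $k$-displacement context-free languages under rational transductions. The proposition presents $D(X)$ as the language of an $(L-1)$-DCFG for $L = k+1$, so $D(X)$ is itself $k$-displacement context-free; closure under rational transductions is a standard consequence of the equivalence of DCFGs with well-nested MCFGs recalled earlier in this subsection.

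For the hard direction I would adapt the classical Chomsky-Sch\"{u}tzenberger construction to the DCFG setting. Given a $k$-DCFG $G = \inang{N, \Sigma, P, S}$ for $L$, I would first put $G$ into a normal form in which every rule either introduces a single terminal letter or has only nonterminals and separators at the leaves of its right-hand side. To each rule $\rho \colon A \to \alpha$ with $rk(A) = r$ I attach a fresh multi-bracket symbol $x_\rho$ of arity $r+1$; collecting these gives an alphabet $X$ of rank at most $k+1$. Next I construct a bracketed DCFG $G'$ over $\Sigma \cup B(X)$ whose rules are obtained from those of $G$ by replacing each $\rho$ by a rule whose right-hand side is $\alpha$ wrapped with $x_\rho^1$ on the left, $\overline{x}_\rho^{r+1}$ on the right, and with $\overline{x}_\rho^l 1 x_\rho^{l+1}$ successively intercalated at the $l$-th separator of $\alpha$ via $+_l$ for $l = 1, \ldots, r$ --- exactly as in the rule for $S_r$ in the preceding proposition. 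Thus if a subterm rooted at $A$ ultimately yields $w_0 1 w_1 1 \ldots 1 w_r$ in $G$, then the corresponding piece of the bracketed derivation reads $x_\rho^1 w_0 \overline{x}_\rho^1 1 x_\rho^2 w_1 \overline{x}_\rho^2 1 \ldots 1 x_\rho^{r+1} w_r \overline{x}_\rho^{r+1}$.

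A short induction on bracketed derivations then gives $L(G') \subseteq D(X)$: the partition that places together the $2(r+1)$ brackets introduced by each rule application witnesses the correctness of every bracketed word, and the three alternatives in the definition of a correct multibracket sequence match precisely the three geometric ways a descendant nonterminal's bracket set can sit relative to an ancestor's --- inside a single piece, strictly between two consecutive pieces, or straddling several pieces via intercalation. I would then exhibit a regular language $R \subseteq B(X)^*$ enforcing local consistency with $G$: the letters adjacent to each $x_\rho^l$ and $\overline{x}_\rho^l$ must be compatible with the $l$-th piece of $\rho$, which is a finite-state condition, and a routine argument yields $L(G') = D(X) \cap R$. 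The letter-to-letter homomorphism $h$ erasing all bracket symbols and keeping terminals satisfies $h(L(G')) = L$, so $L = h(D(X) \cap R)$ is a rational transduction of $D(X)$. The main obstacle will be the inductive step $L(G') \subseteq D(X)$: it requires careful tracking of how the $r+1$ pieces of an intercalated subterm distribute among the pieces of the enclosing term, so that the non-intersection clauses of the definition hold at every nesting depth.
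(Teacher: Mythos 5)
The paper does not actually prove this theorem: it explicitly omits the proof, stating that it can be recovered from the Chomsky--Sch\"{u}tzenberger-type characterization of (all) multiple context-free languages in the cited work of Yoshinaka, Kaji and Seki, with natural modifications for the well-nested case. Your outline is essentially a reconstruction of that cited argument --- the easy direction via the preceding proposition plus closure under rational transductions, and the hard direction via rule-indexed multibrackets of arity $r+1$ wrapped around and intercalated into the right-hand sides --- so it is consistent in spirit with what the paper relies on, and the wrapping pattern you describe matches the $S_r$-rule of the proposition exactly.

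Two points deserve attention. First, there is a type mismatch in your claim $L(G') = D(X) \cap R$: the left-hand side lives in $(\Sigma \cup B(X))^*$ while the right-hand side lives in $B(X)^*$, so as written the equality cannot hold. The standard repair is either to encode each terminal occurrence by a fresh arity-$1$ bracket pair (and let the transducer emit the terminal when it reads the opening bracket), or to abandon the homomorphism-of-intersection normal form and work directly with a rational transducer that inserts terminals; since the theorem only asks for a rational transduction, either fix is routine, but one of them must be made. Second, you identify the inclusion $L(G') \subseteq D(X)$ as the main obstacle, but that direction is the comparatively mechanical one (each rule application contributes one set $H_t$, and the well-nestedness of the term structure gives the non-intersection clauses). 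The genuinely delicate inclusion is $D(X) \cap R \subseteq L(G')$: one must show that global multibracket correctness together with purely local (finite-state) consistency forces the $2(r+1)$ positions of each chain to reassemble into a legitimate rule application at every level, which is where the ``natural modifications'' to the MCFG argument actually live. Your sketch would need to address that direction explicitly to count as a complete proof.
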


\section{Monoid automata for displacement context-free grammars}\label{section-monoid-displ}
In this section we characterize the class of $k$-displacement context-free languages in terms of monoid automata.  For any set of $X$ of multibrackets we construct a monoid whose identity language is exactly $D(X)$ and then use Chomsky-Sch\"{u}tzenberger and Kambites theorems to prove the desired result.

Let $X$ be a generating set, $rk(X) = L$ and $\aar \colon X \to [1, L]$ be the arity function. Let $A$ be the set $A = \{a_{x,i} \mid x \in X, 1\!\leq\!i\!\leq\!rk(x)\}$. We define two homomorphisms $\phi_1, \phi_2 \colon B(X) \to P(A)$, setting $\phi_1(x^i) = a_{x,i},\: \phi_1(\overline{x}^i) = \overline{a}_{x,i}, \linebreak \phi_2(x^1) = a_{x,1}, \: \phi_2(\overline{x}^{i-1}) = a_{x,i}, \phi_2(x^i) = \overline{a}_{x,i}, \: i \in \overline{2,\aar(x)}, \:\phi_2(\overline{x}^{\aar(x)}) = \overline{a}_{x,0}$. We introduce the factor-monoid $S_X = B(X) / Ker \phi$ where $\phi(x) = \inang{ \phi_1(x), \phi_2(x) } \colon \linebreak B(X) \to \mathcal{P}(A) \times \mathcal{P}(A)$. Then $B(X)$ can be considered as the generating set for $S_X$ and we want to prove that the identity language of $S_X$ is exactly $D(X)$.

Let $w$ be a word representing identity in $S_X$ and $w_1, w_2$ be the words representing its images under $\phi_1, \phi_2$ respectively. Then $w_1$ and $w_2$ represent identity in $\mathcal{P}(A)$. Let $R_1, R_2$ be the binary relations over $Pos(w)$ defined as follows: $(i, j) \in R_l$ iff $w_l[i]$ and $w_l[j]$ contract with each other when reducing the word $w_l$ to identity. Since there is only one ``contracting relation'' for any correct bracket sequence, the relations $R_1, R_2$ are uniquely defined by the word $w$ which represents identity.

\begin{proposition}\label{prop-chain}
Let $x \in X, r = \aar(x)$ and $i_1 < j_1 < \ldots < i_r < j_r$ be such that $w[i_1] = x^0$ and it holds that $(i_l, j_l) \in R_1$ for any $l \leq r$ and $(j_l, i_{l+1}) \in R_2$ for any $l < r$. Then $(i_1, j_r) \in R_2$ and for any $l < r$ it holds that $w[i_l] = x^l, \: w[j_l] = \overline{x}^l$.
\end{proposition}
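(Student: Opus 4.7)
The plan is to proceed in two phases: first inductively pin down the letters $w[i_l]$ and $w[j_l]$ by alternately applying the $R_1$- and $R_2$-hypotheses, and then deduce $(i_1, j_r) \in R_2$ by computing the $\phi_2$-images of the two endpoints and appealing to the uniqueness of the canonical bracket matching in the polycyclic monoid $\mathcal{P}(A)$.

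The key preliminary observation is that the restrictions of $\phi_1$ and $\phi_2$ to $B(X)$ are injective: distinct multibrackets are sent to distinct generators of $\mathcal{P}(A)$, as one checks directly from the defining formulas. Consequently, if a position $p$ participates in a known $R_t$-pair with partner $q$ and $w[p]$ is known, then $\phi_t(w[q])$ must be the formal inverse of $\phi_t(w[p])$, and this single generator has a unique preimage in $B(X)$, which pins down $w[q]$. Moreover, because $w_1$ and $w_2$ each reduce to the identity in $\mathcal{P}(A)$, the relations $R_1$ and $R_2$ are the unique bracket matchings of the balanced words $w_1$ and $w_2$ respectively.

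For the induction, the base case $w[i_1] = x^1$ is given, and $(i_1, j_1) \in R_1$ combined with $\phi_1(x^1) = a_{x,1}$ forces $w[j_1] = \overline{x}^1$. Inductive step for $l < r$: from $w[j_l] = \overline{x}^l$, the formula $\phi_2(\overline{x}^l) = a_{x,l+1}$ together with $(j_l, i_{l+1}) \in R_2$ forces $w[i_{l+1}] = x^{l+1}$, and then $(i_{l+1}, j_{l+1}) \in R_1$ forces $w[j_{l+1}] = \overline{x}^{l+1}$. Carrying the induction up to $l = r$ additionally yields $w[i_r] = x^r$ and $w[j_r] = \overline{x}^{\aar(x)}$, which is what is needed for the outer matching.

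It remains to verify $(i_1, j_r) \in R_2$. By the identified endpoints, $\phi_2(w[i_1]) = \phi_2(x^1)$ and $\phi_2(w[j_r]) = \phi_2(\overline{x}^{\aar(x)})$ are mutually inverse generators of $\mathcal{P}(A)$, and among letters of $B(X)$ these particular generators are produced under $\phi_2$ only by $x^1$ and $\overline{x}^{\aar(x)}$ themselves. The given $R_2$-matchings $(j_l, i_{l+1})$ for $l < r$ are pairwise disjoint siblings that share a common pre-depth in the bracket matching of $w_2$; this, together with the global balancedness of $w_2$, forces each intervening gap subword between consecutive siblings, as well as the prefix $w_2[i_1{+}1, j_1{-}1]$ and the suffix $w_2[i_r{+}1, j_r{-}1]$, to be $\phi_2$-balanced on its own, so that the entire interior $w_2[i_1{+}1, j_r{-}1]$ is balanced and uniqueness of the polycyclic bracket matching of $w_2$ gives $(i_1, j_r) \in R_2$. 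The main obstacle is this final depth argument: one has to rule out the a priori possibility that some stray opening of the same type strictly inside $[i_1, j_r]$ intercepts the match of $j_r$, which amounts to showing that no matched pair in $R_2$ crosses the boundary of any of the listed gap subwords — a statement that follows from the nested/disjoint structure of bracket matchings together with the equal pre-depths of the given siblings.
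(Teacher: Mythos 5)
Your first phase is sound and is exactly what the paper dismisses with ``the second statement is established according to the definitions'': since $\phi_1$ and $\phi_2$ are injective on $B(X)$ and $R_1,R_2$ are the unique contracting relations of the balanced words $w_1,w_2$, alternating the $R_1$- and $R_2$-links forces $w[i_l]=x^l$, $w[j_l]=\overline{x}^l$ (reading the hypothesis $w[i_1]=x^0$ as the evident typo for $x^1$). The gap is in your second phase. It rests on the claim that the prefix $w_2[i_1{+}1..j_1{-}1]$, the suffix $w_2[i_r{+}1..j_r{-}1]$ and the gaps between consecutive sibling pairs are each $\phi_2$-balanced on their own. That claim is false. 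Take $X=\{x,z\}$ with $\mathrm{ar}(x)=\mathrm{ar}(z)=2$ and
\[
w \;=\; x^1\, z^1\, \overline{z}^1\, \overline{x}^1\, x^2\, z^2\, \overline{z}^2\, \overline{x}^2 .
\]
Both projections are balanced, $R_1=\{(0,3),(1,2),(4,7),(5,6)\}$ and $R_2=\{(0,7),(1,6),(2,5),(3,4)\}$, and the $x$-chain $i_1=0<j_1=3<i_2=4<j_2=7$ satisfies all hypotheses of the proposition. Yet $w_2[1..2]=a_{z,1}a_{z,2}$ and $w_2[5..6]=\overline{a}_{z,2}\,\overline{a}_{z,1}$ are not balanced: the $R_2$-pairs $(1,6)$ and $(2,5)$ start inside your prefix and end inside your suffix, which is precisely the configuration you announce you will ``rule out''. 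There are no siblings here whose ``equal pre-depth'' could help ($r=2$, one sibling pair), so the asserted lemma is not merely unjustified but unprovable. The conclusion $(0,7)\in R_2$ of course still holds, because what is actually needed is only that the \emph{concatenation} of the values of these segments is $1$ (here $a_{z,1}a_{z,2}\cdot\overline{a}_{z,2}\overline{a}_{z,1}=1$); but proving that product is $1$ is essentially the whole content of the proposition, so your reduction is circular at the decisive step.

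The paper closes the argument globally rather than segment by segment. Since $R_1$ and $R_2$ are total bijections on $Pos(w)$, the chain does not stop at $j_r$: following the $R_2$-partner of $j_r$ one obtains an alternating $R_1/R_2$ cycle $p_1=i_1,q_1=j_1,\dots,p_{dr},q_{dr},p_{dr+1}=p_1$ in which every block of $2r$ consecutive vertices is again an ascending chain headed by a position carrying $x^1$. One may rotate the cycle so that $i_1$ is the leftmost position of the cycle labelled $x^1$; if the cycle did not close after one round, then $p_{r+1}>i_1$, and an induction using the planarity (non-crossing) of $R_1$ and $R_2$ against the already-placed pairs $(i_l,j_l)$ and $(j_l,i_{l+1})$ traps every later vertex of the cycle strictly inside $[p_{r+1},j_r]$, so the cycle can never return to $p_1=i_1$ --- a contradiction. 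If you want to salvage your endpoint-plus-balanced-interior strategy, you would have to prove $U_1U_2\cdots U_r=1$ for the values $U_l$ of the segments $w_2[i_l{+}1..j_l{-}1]$, and some such global cycle or minimality argument appears unavoidable there.
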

\begin{proof}
The second statement is established according to the definitions of $\phi_1, \phi_2$ and $R_1, R_2$. It remains to prove the first one. There is a cycle of numbers $p_1 = i_1, q_1 = j_1, \ldots, p_r = i_r, q_r = j_r, p_{r+1}, q_{r+1}, \ldots, p_{2r}, q_{2r}, \ldots, p_{dr}, q_{dr}, p_{dr+1} = p_1$ such that $(p_l, q_l) \in R_1$ and $(q_l, p_{l+1}) \in R_2$ for any $l \leq dr$. We prove that actually $r = 1$ which implies the theorem. Suppose the converse and let $i_1$ be the leftmost element $i$ in this cycle such that $w[i] = x^0$, then $p_{r+1} > p_1$. It is easy to prove by induction on $t$ using the planarity of $R_1, R_2$ that for every $t > r$ there exists some $l \leq r$ such that $i_l < p_t < q_t < j_l$ and $p_{r+1} \leq p_t < q_t < q_r$. This contradicts the equality $p_{dr+1} = p_1$. The proposition is proved.
\end{proof}

Let us call refer as chain cycles the sets consisting of $i_1, j_1, \ldots, i_r, j_r$ from the proposition. They form a partition of $Pos(w)$ since $R_1, R_2$ are total one-to-one relations. The proposition above and the planarity of relations $R_1, R_2$ imply that chain cycles can serve as sets $H_l$ from the definition of multibracket sequence. So we have proved:
\begin{lemma}
Any element of the identity language of $S_X$ with respect to the set $B(X)$ is a correct multibracket sequence over the set $X$.
\end{lemma}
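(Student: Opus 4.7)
The plan is to take the chain cycles defined just before the statement as the blocks $H_1, \ldots, H_m$ witnessing that $w$ is a correct multibracket sequence, and to verify the two defining properties. Since $w$ represents identity in $S_X$, both $\phi_1(w)$ and $\phi_2(w)$ represent identity in $\mathcal{P}(A)$, so each $R_\ell$ is a complete, planar matching on $Pos(w)$: any two pairs $(a,b), (c,d) \in R_\ell$ with $a < c$ satisfy either $b < c$ or $a < c < d < b$. The chain cycles are uniquely determined by these two relations and, as the authors already remark, partition $Pos(w)$.

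Property~1 of the definition then falls out of Proposition~\ref{prop-chain}: for a chain cycle $i_1 < j_1 < \ldots < i_r < j_r$ the proposition gives $r = \aar(x)$, $w[i_l] = x^l$ and $w[j_l] = \overline{x}^l$, plus the auxiliary fact $(i_1, j_r) \in R_2$. So property~1 is essentially free once the chain cycle machinery is in place.

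The substance is in property~2. I would fix two chain cycles $H = \{i_1 < j_1 < \ldots < i_r < j_r\}$ and $H' = \{i'_1 < j'_1 < \ldots < i'_s < j'_s\}$ and split according to the mutual position of the intervals $[i_1, j_r]$ and $[i'_1, j'_s]$, which by planarity of $R_2$ applied to the closing pairs $(i_1, j_r), (i'_1, j'_s) \in R_2$ must either be disjoint or nested. In the disjoint case the first alternative of the definition holds. In the nested case, say $[i'_1, j'_s] \subseteq [i_1, j_r]$, the position $i'_1$ lies in a unique pocket of $H$, which is either some $R_1$-arc $(i_l, j_l)$ or some $R_2$-gap $(j_l, i_{l+1})$. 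Using planarity of $R_1$ to control $R_1$-arcs of $H'$ and planarity of $R_2$ to control $R_2$-arcs of $H'$, I would show by induction on the index within $H'$ that the entire cycle $H'$ is confined to that pocket. If the pocket is an $R_2$-gap we obtain the second alternative, and if it is an $R_1$-arc, every $R_1$-arc of $H'$ nests inside this single arc of $H$, yielding the third alternative. The symmetric inclusion is handled by swapping the roles of $H$ and $H'$.

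The main obstacle is the induction in the nested case: one must verify that, starting from $i'_1$ in a specific pocket, each subsequent edge of $H'$ — both $R_1$-arcs $(i'_{l'}, j'_{l'})$ and $R_2$-arcs $(j'_{l'}, i'_{l'+1})$ — respects the boundaries of that pocket. Each step is a one-line application of the planarity of the appropriate $R_\ell$, but the bookkeeping is tedious. Beyond the planarity of reduction in polycyclic monoids, no further input is needed.
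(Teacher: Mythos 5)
Your overall strategy coincides with the paper's: the paper also takes the chain cycles as the blocks $H_t$, reads off property~1 from Proposition~\ref{prop-chain}, and disposes of property~2 with the single sentence that it follows from ``the proposition above and the planarity of relations $R_1, R_2$''. So the only place where your proposal adds content beyond the paper is the nested case of property~2, and there your key inductive claim is false.

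You assert that if $[i'_1, j'_s] \subseteq [i_1, j_r]$ and $i'_1$ lies inside an $R_1$-arc $(i_l, j_l)$ of $H$, then the whole of $H'$ stays inside that arc. The paper's own third figure is a counterexample: there the positions satisfy $i_2 < i'_1 < j'_1 < j_2 < i_3 < i'_2 < j'_2 < j_3$, so $i'_1$ sits in the pocket $(i_2, j_2)$ but $i'_2, j'_2$ sit in the pocket $(i_3, j_3)$. The escape happens along the $R_2$-arc $(j'_1, i'_2)$ of $H'$, which properly \emph{contains} the $R_2$-gap $(j_2, i_3)$ of $H$; planarity of $R_2$ permits this nesting (it just forbids crossing), so your ``one-line application of planarity'' at that step does not confine $H'$ to the pocket. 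This configuration is legal and is precisely what the third alternative of the definition is written to accommodate, so an induction that proves confinement cannot close. The invariant you actually need is the weaker statement that every $R_1$-arc of $H'$ lies inside some (possibly different) $R_1$-arc of $H$, or symmetrically with $H$ and $H'$ exchanged; and you must separately exclude the mixed situation in which some $R_1$-arcs of $H'$ lie inside arcs of $H$ while others contain arcs of $H$. That exclusion is where the two planarity relations genuinely interact: for instance, if $(i'_1, j'_1) \subset (i_1, j_1)$ but $(i_2, j_2) \subset (i'_2, j'_2)$, then $j'_1 < j_1 < i'_2 < i_2$, so the $R_2$-arcs $(j'_1, i'_2)$ and $(j_1, i_2)$ cross, contradicting planarity of $R_2$. (By contrast, when $i'_1$ falls in an $R_2$-gap $(j_l, i_{l+1})$, confinement does hold by one application of $R_2$-planarity to $(i'_1, j'_s)$ versus $(j_l, i_{l+1})$, giving the second alternative.) As written, the argument for the $R_1$-pocket case would fail on legitimate inputs, so this is a genuine gap rather than mere bookkeeping.
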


\begin{lemma}
Any correct multibracket sequence over the set $X$ is an element of the identity language of $S_X$ with respect to the set $B(X)$.
\end{lemma}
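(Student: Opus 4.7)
The plan is to exhibit, for any correct multibracket sequence $w$ with partition $\{H_1,\ldots,H_m\}$, two explicit matchings $M_1,M_2$ of $\mathrm{Pos}(w)$ whose pairs carry mutually inverse generators in $\phi_1(w),\phi_2(w)$ respectively, and which are both properly nested in the sense that any two arcs are either disjoint or one strictly contains the other. Since a word in $(A\cup\overline{A})^*$ equipped with such a nested matching of inverse pairs reduces to the identity in $\mathcal{P}(A)$ by repeatedly contracting some innermost adjacent $a\overline{a}$, this gives $\phi(w)=\langle 1,1\rangle$ and places $w$ in the identity language of $S_X$.

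For each chain cycle $H_t=\{i_1<j_1<\ldots<i_r<j_r\}$ of type $x$ I would let $M_1$ contain the ``vertical'' arcs $(i_l,j_l)$ for $l=1,\ldots,r$, and $M_2$ contain the outer arc $(i_1,j_r)$ together with the gap arcs $(j_l,i_{l+1})$ for $l=1,\ldots,r-1$. The labelling condition $w[i_l]=x^l$, $w[j_l]=\overline{x}^l$ of a chain cycle combined with the definitions of $\phi_1,\phi_2$ makes the inverse-pair check immediate: under $\phi_1$ the pair $(i_l,j_l)$ carries $a_{x,l},\overline{a}_{x,l}$; under $\phi_2$ the outer pair $(i_1,j_r)$ carries $a_{x,1}$ and its formal inverse, and each gap pair $(j_l,i_{l+1})$ carries $a_{x,l+1},\overline{a}_{x,l+1}$.

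Nesting within a single chain cycle is immediate from the strict ordering. Between two distinct chain cycles $H_t,H_{t'}$ I would case-split on the three alternatives of condition~(2) of the definition of correct multibracket sequence. The fully-disjoint alternative is trivial. In the ``gap'' alternative, where $H_{t'}$ sits inside some gap $(j_m,i_{m+1})$ of $H_t$, every arc of either matching of $H_{t'}$ lies inside that interval, hence is disjoint from every $M_1$-arc of $H_t$ and strictly contained in the $M_2$-gap arc $(j_m,i_{m+1})$ as well as in the outer arc $(i_1,j_r)$. In the arc-in-arc alternative, after assuming that each $(i_l,j_l)$ of $H_t$ lies inside a uniquely determined arc $(i'_{\sigma(l)},j'_{\sigma(l)})$ of $H_{t'}$ with $\sigma$ weakly increasing, the $M_1$-arcs of $H_t$ nest into their covering arcs and are disjoint from all the others; for $M_2$, the outer arc $(i_1,j_r)$ of $H_t$ nests strictly inside the outer arc of $H_{t'}$, and each gap arc $(j_l,i_{l+1})$ either is swallowed by a single arc of $H_{t'}$ (when $\sigma(l)=\sigma(l+1)$) or swallows exactly those arcs and gap arcs of $H_{t'}$ that sit strictly between indices $\sigma(l)$ and $\sigma(l+1)$, and in either situation never crosses an $M_2$-arc of $H_{t'}$.

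The main obstacle is precisely this last sub-case of $M_2$: a gap arc of $H_t$ that stretches across several arcs of $H_{t'}$ must be checked against both the outer arc and every gap arc of $H_{t'}$, and the relevant inequalities on indices have to be extracted carefully from the arc-in-arc condition. The planarity of the relations $R_1,R_2$ used en route to Proposition~\ref{prop-chain} is the clean way to keep this bookkeeping honest; everything else reduces to direct order manipulation on the chosen endpoints.
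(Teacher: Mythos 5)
Your proof is correct, but it takes a genuinely different route from the paper's. The paper argues via the grammar $G_X$ for $D(X)$: it proves by induction on derivations that every word $w_0\#\cdots\#w_i$ derivable from $S_i$ satisfies $\mu_1(w_0)=\cdots=\mu_1(w_i)=\mu_2(w)=1$, with only the multibracket-introducing rule requiring an actual computation in $\mathcal{P}(A)$. You instead work straight from the combinatorial definition of a correct multibracket sequence, turning each partition class $H_t$ into the vertical arcs for $M_1$ and the outer-plus-gap arcs for $M_2$, and reducing to the standard fact that a non-crossing perfect matching of inverse pairs certifies identity in a polycyclic monoid. Both are sound; what each buys is different. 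The paper's induction is short and mechanical, but it leans on the earlier Proposition that $G_X$ generates exactly $D(X)$, which the paper states without proof -- so the real combinatorial content is hidden there. Your argument is self-contained relative to the definition of $D(X)$, needs no grammar, and dovetails with the converse lemma (your $M_1,M_2$ are precisely the relations $R_1,R_2$ and the chain cycles of Proposition~2 read backwards), at the price of the arc-in-arc case analysis; I checked that the inequalities you would need ($\sigma$ weakly increasing, gap arcs of $H_t$ containing exactly the gap arcs of $H_{t'}$ with index in $[\sigma(l),\sigma(l+1))$ and disjoint from the rest) do all follow from the disjointness and ordering of the intervals $(i'_{l'},j'_{l'})$, so the sketch completes. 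One point worth making explicit: your claim that the outer arc $(i_1,j_r)$ carries an inverse pair under $\phi_2$ requires reading the paper's $\phi_2(\overline{x}^{\aar(x)})=\overline{a}_{x,0}$ as the evident typo for $\overline{a}_{x,1}$ (note $a_{x,0}\notin A$, and the paper's own computation in its proof of this lemma uses $\overline{a}_{x,1}$); with that correction everything matches.
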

\begin{proof}
Recall the grammar $G_X$ from the previous section generating the set $D(X)$. To prevent confusion we denote the separator in the grammar by $\#$ instead of $1$ We extend the mappings $\phi_l$ to the set $(B(X) \cup \{\#\})^*$ defining $\phi_1(\#) = \phi_2(\#) = 1$. We denote by $\mu_l(w)$ the value of the word $\phi_l(w)$ in $\mathcal{P}(A)$, obviously $\mu_l$ is a homomorphism. We want to prove by induction that if $S_i \vdash w, \: w = w_0 \# w_1 \ldots \# w_i$, then $\mu_1(w_0) = \mu_1(w_1) = \ldots = \mu_1(w_i) = \mu_2(w) = 1$.

Consider the rule applied in the root of the derivation tree. The basis of induction if the obvious case of the rules $S_0 \to \epsilon$ or $S_1 \to \#$. In case of the rules $S_{i+j} \to S_i \cdot S_j$ and $S_{i+j-1} \to S_i +_k S_{j}$ the induction statement follows from the fact that the inverse homomorphic image of $1$ is closed under concatenation and intercalation.

In the case of the rule $S_i \to x^1(\ldots( S_i  +_1 (\overline{x}^1 \# x^2)+_2\ldots) +_i (\overline{x}^i \# x^{i+1}) \overline{x}^{i+1}$ we consider the components of the word $w$. There exists a word $u = u_0 \# \ldots \# u_{i}$, derivable from $S_i$, such that for any $j \leq i$ it holds that $w_i = x^{i+1} u_i \overline{x}^{i+1}$. So $\mu_1(w_l) = \mu_1(x^{l+1}) \mu_1(u_l) \mu_1(\overline{x}^{l+1}) = a_{x,l+1} 1 \overline{a}_{x,l+1} = 1$. Let us prove $\mu_2(v) = 1$, indeed $\mu_2(v) = \mu_2(x^1) \mu_2(u^0) \mu_2(\overline{x}^1 \# x^2) \mu_2(u^1) \ldots \mu_2(\overline{x}^i \# x^{i+1}) \mu_2(u_i) \mu_2(\overline{x}^{i+1}) = a_{x, 1} \mu_2(u_0)a_{x,2} \overline{a}_{x,2}\mu_2(u_1) \ldots a_{x,i+1} \overline{a}_{x,i+1} \mu_2(u_i) \overline{a}_{x,1} = a_{x,1} \mu_2(u) \overline{a}_{x,1} = a_{x,1} \overline{a}_{x,1} = 1$. The last case is verified and the lemma is proved.
\end{proof}

\begin{theorem}
The class of languages recognized by $k$-DCFGs is exactly the class of languages recognized by $S_X$-automata for the generating sets $X$ of rank $k+1$.
\end{theorem}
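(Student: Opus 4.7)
The proof is essentially a chaining of three results already assembled in the paper, so the plan is mostly to identify which earlier statement supplies each link and check that the rank parameters line up. The plan is to show the two inclusions of the theorem by passing through the generalized Dyck language $D(X)$ as a common intermediate object.

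First I would observe that the two lemmas immediately preceding the theorem together state exactly that the identity language of $S_X$, taken with respect to the generating set $B(X)$, coincides with $D(X)$. This is the crucial bridge between the monoid $S_X$ and the Chomsky--Sch\"{u}tzenberger theorem: once it is available, we may freely substitute $D(X)$ for the identity language of $S_X$ in any statement involving rational transductions.

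Next, for the forward direction, assume $L$ is accepted by some $S_X$-automaton with $\mathrm{rk}(X) = k+1$. By Kambites' proposition (condition 2), $L$ is a rational transduction of the identity language of $S_X$ with respect to some finite generating set; choosing the set $B(X)$ and applying the two lemmas above identifies this identity language with $D(X)$. Theorem~\ref{Chomsky} (the Chomsky--Sch\"{u}tzenberger theorem for DCFGs) then concludes that $L$ is $k$-displacement context-free, since $X$ has rank $k+1$. For the reverse direction, start with a $k$-DCFG language $L$; Theorem~\ref{Chomsky} furnishes an alphabet $X$ of rank $k+1$ such that $L$ is a rational transduction of $D(X)$. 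By the two lemmas, $D(X)$ is the identity language of $S_X$ with respect to $B(X)$, so Kambites' proposition (this time going from condition 2 to condition 1) yields an $S_X$-automaton recognizing $L$.

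The routine but necessary bookkeeping is to verify that the rank condition ``$\mathrm{rk}(X)=k+1$'' is preserved on both passes: Theorem~\ref{Chomsky} uses precisely this rank, and the construction of $S_X$ in the preceding paragraphs does not change it, so no arithmetic adjustment is required. There is no substantive obstacle here; the whole content of the theorem has been absorbed into the two lemmas about the identity language of $S_X$, and the theorem itself is just their combination with the Kambites and Chomsky--Sch\"{u}tzenberger results already on record.
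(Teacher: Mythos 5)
Your proposal is correct and follows essentially the same route as the paper: both identify the identity language of $S_X$ with $D(X)$ via the two preceding lemmas, then chain Kambites' proposition with the Chomsky--Sch\"{u}tzenberger theorem for DCFGs (Theorem~\ref{Chomsky}) in both directions. The only cosmetic difference is that for the forward inclusion the paper invokes the explicit grammar $G_X$ together with closure of DCFLs under rational transductions, whereas you invoke the converse direction of Theorem~\ref{Chomsky}; these are interchangeable here.
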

\begin{proof}
By the lemmas above the language $S_X$ coincides with the set of multibracket sequences $D(X)$, which is generated by a $(rk(X)-1)$-displacement grammar. Other languages recognized by $S_X$-automata are its images under rational transductions and, hence, displacement context-free languages since the latter are closed under rational transductions. From Theorem \ref{Chomsky} it follows that all $k$-displacement context-free languages are rational transductions of $D(X)$ for some set $X$ of rank $k+1$ and then by Theorem \ref{Kambites} they are all recognized by $S_X$-automata.
\end{proof}

\section{Simultaneous two-stack automata}

In the case of usual bracket sequences the opening and closing brackets naturally correspond to push and pop operations. In the case of multibracket sequences each bracket is in fact a pair of brackets, so every multibracket is an operation on the pair of stacks. The full power of two-stack machines allows to simulate every recursively enumerable language, but in our case there are some restrictions on possible operations. The most principal limitation is that our operations are synchronized: every move changes the length of each stack by $1$. In general, there are only four possible types of operations: push the same symbol on both stacks, move the symbol from the first stack to the second, return a symbol back to the first stack from the second and remove the same symbol from both the stacks. Also the rank of a symbol determines the number of time it is exchanged between the stacks.

Note that Proposition \ref{prop-chain} in fact postulates that if a symbol $a$ of arity $k$ is pushed on the stack together with its copy $a'$ then after transferring it $2(k-1)$ times between the stacks it would be removed together with exactly the same symbol $a'$. Therefore we should trace only the number of exchanges the symbol participated in, so we will keep in stacks not the symbols alone but the pairs consisting of a symbol and a counter of its number of exchanges. This counter is incremented every time the symbol is moved from one stack to another and equals $1$ after the first push. When we proceed the remove operation, we verify that the top element of the first stack is $\inang{a, 2k-1}$ and the top element of the second stack is $\inang{a, 1}$ with the same $a$. We call this model of computation a simultaneous two-stack automata.

\newcommand{\ppu}{\textrm{PUSH} }
\newcommand{\mmo}{\textrm{MOVE} }
\newcommand{\rre}{\textrm{RETURN} }
\newcommand{\ppo}{\textrm{POP} }
\newcommand{\kke}{\textrm{KEEP} }

\begin{definition}
A simultaneous two-stack automaton of rank $k$ ($k$-STSA) is a tuple $\Aa = \inang{Q, \Sigma, \Gamma, \aar, P, q_0, F}$ where $Q$ is a finite set of states, $\Sigma$ is a finite alphabet, $\Gamma$ is a finite stack alphabet, $\aar \colon \Gamma \to \overline{1,k}$ is the arity function, $P$ is the set of transitions, $q_0 \in Q$ is an initial state and $F \subseteq Q$ is a set of final states. Transitions has the form $(\inang{q_1, a} \to \inang{q_2, \tau, \alpha})$, where $q_1, q_2$ are states, $a \in \Sigma \cup \{\epsilon\}$ is an input symbol (or an empty word) and $\tau \in \inang{\ppu, \mmo, \rre, \ppo}$ is a command and $\alpha \in \Gamma$ is a stack symbol.
\end{definition}

As in the case of usual finite automata the formal definition of the acceptance relation is given through the notion of configuration, which is the instantaneous description of the automaton.

\newcommand{\Sigman}{\Sigma_{\bbbn}}
\begin{definition}
A configuration of a simultaneous two-stack automaton $\Aa = \inang{Q, \Sigma, \Gamma, \aar, P, q_0, F}$ is a tuple $\inang{q, u, \beta_1, \beta_2}$ where $q \in Q$ is the current state, $u$ is the current suffix of input, which has not been processed yet and $\beta_1, \beta_2$ are the words in the alphabet $\Sigma_{\bbbn} = \Sigma \times \bbbn$. A transition relation $\vdash_{\Aa}$ is the smallest transitive reflexive relation such that:
\begin{itemize}
  \item If $(\inang{q_1, a} \to \inang{q_2, \ppu, \alpha}) \in P$ then $\inang{q_1, au, \beta_1, \beta_2} \vdash \inang{q_2, u, \beta_1(\alpha,1), \linebreak\beta_2(\alpha,1)}$ for any words $u \in \Sigma^*$ and $\beta_1, \beta_2 \in \Sigman$.
  \item If $(\inang{q_1, a} \to \inang{q_2, \mmo, \alpha}) \in P$ then $\inang{q_1, au, \beta_1 (\alpha,2i-1), \beta_2} \vdash \inang{q_2, u, \beta_1, \linebreak\beta_2(\alpha,2i)}$ for any words $u \in \Sigma^*$ and $\beta_1, \beta_2 \in \Sigman$ and any counter value $i < \aar(\alpha)$.
  \item If $(\inang{q_1, a} \to \inang{q_2, \rre, \alpha}) \in P$ then $\inang{q_1, au, \beta_1, \beta_2(\alpha,2i)} \vdash \inang{q_2, u, \linebreak\beta_1(\alpha,2i+1), \beta_2}$ for any words $u \in \Sigma^*$ and $\beta_1, \beta_2 \in \Sigman$ and any counter value $i < \aar(\alpha)$.
  \item If $(\inang{q_1, a} \to \inang{q_2, \ppo, \alpha}) \in P$ then $\inang{q_1, au, \beta_1(\alpha,2\aar(\alpha)-1), \beta_2(\alpha,1)} \vdash \inang{q_2, u, \beta_1, \beta_2}$ for any words $u \in \Sigma^*$ and $\beta_1, \beta_2 \in \Sigman$.
\end{itemize}
The language $L(\Aa)$ recognized by the automaton equals $L(\Aa) = \inbr{w \in \Sigma^* \mid \exists q \in F (\inang{q_0, w, \epsilon, \epsilon} \vdash \inang{q, \epsilon, \epsilon, \epsilon})}$.
\end{definition}

The condition on counter parity reflects the fact that only the symbols that were moved from the first stack to the second can be returned back. If a symbol was initially pushed to the second stack then it would be removed from it only by the pop operation. That is done in order to keep the structure of multibracket chain which has one \kav{embracing} link in the lower half plane and $k$ small links in the upper half plane. It shows that the stacks are not completely symmetric in their roles, in fact the first stack is basic and the second is just an additional memory register which also stores the placeholders of symbols pushed to the first stack to process the \mmo and \rre operations in a correct order.

\newcommand{\rrule}[4]{\inang{#1, #3} \to \inang{#2, #4}}
\renewcommand{\rule}[5]{\inang{#1, #3} \to \inang{#2, #4, #5}}
\newcommand{\arrayrrule}[4]{\inang{#1, #3} & \to & \inang{#2, #4}}
\newcommand{\arrayrule}[5]{\inang{#1, #3} & \to & \inang{#2, #4, #5}}
Note that the set of possible memory operations can be extended by the $\kke$ command which does not change the contents of the stacks. In order to simulate an edge with $\kke$ between states $q_1$ and $q_2$ we add a dummy state $q'$ in the middle and a new symbol $Z$ of arity $1$ to the stack alphabet and replace the edge under consideration with two transitions $\inang{q_1, a} \to \inang{q', \ppu, Z}$ and $\inang{q', \epsilon} \to \inang{q_2, \ppo, Z}$. Such procedure decreases the number of ``keeping'' edges so we proceed by induction. In the further the assume that rules of the form $\inang{q_1, a} \to \inang{q_2, \kke}$ are also allowed in the set of transitions.

\begin{example}\label{example-copy}
The rank $2$ two-stack simultaneous automaton $\Aa = \inang{\inbr{q_i \mid 0 \leq i \leq 6}, \inbr{a,b}, \inbr{A,B}, \aar, P, q_0, \inbr{q_6}}$ where $\aar(A) = \aar(B) = 2$ with the set of transitions specified below recognizes the crossing copy language $\{ a^m b^n a^m b^n \mid m, n \in \bbbn\}$.
$$
\begin{array}{rclp{4cm}rcl}
    \arrayrule{q_0}{q_0}{a}{\ppu}{A} &\quad& \arrayrrule{q_0}{q_1}{\epsilon}{\kke}\\
    \arrayrule{q_1}{q_1}{b}{\ppu}{B} &\quad& \arrayrrule{q_1}{q_2}{\epsilon}{\kke}\\
    \arrayrule{q_2}{q_2}{\epsilon}{\mmo}{B} &\quad& \arrayrrule{q_2}{q_3}{\epsilon}{\kke}\\
    \arrayrule{q_3}{q_3}{a}{\mmo}{A} &\quad& \arrayrrule{q_3}{q_4}{\epsilon}{\kke}\\
    \arrayrule{q_4}{q_4}{\epsilon}{\rre}{A} &\quad& \arrayrrule{q_4}{q_5}{\epsilon}{\kke}\\
    \arrayrule{q_5}{q_5}{b}{\rre}{B} &\quad& \arrayrrule{q_5}{q_6}{\epsilon}{\kke}\\
    \arrayrule{q_6}{q_6}{\epsilon}{\ppo}{B} &\quad& \arrayrule{q_6}{q_6}{\epsilon}{\ppo}{A}
\end{array}
$$
\end{example}

For the sake of clarity we describe the computation process of this automaton in details. It is not difficult to see that to traverse the edges successfully the input word should be of the form $a^{m_1} b^{n_1} a^{m_2} b^{n_2}$ on the path from $q_0$ to $q_6$, otherwise some of the reading operations would be impossible. Assume we have a word $a^{m_1} b^{n_1} a^{m_2} b^{n_2}$ that is accepted by the automaton, let us prove that $m_1 = m_2$ and $n_1 = n_2$. In the first part of its computation the automaton reads all the $a$'s from the first segment of the word and both of its stacks contain the words $(A,1)^{m_1}$. Afterwards the automaton passes the edge to $q_1$ and reads all the $b$'s from the second segment, so both the stacks contain $(A,1)^{m_1}(B,1)^{n_1}$ when the automaton is entering the state $q_2$. Note that $A$ should be on the top of the first stack in the state $q_3$, so we should move all the $B$'s to the second stack in $q_2$ and the number of necessary moves is exactly $n_1$. Hence the first stack contains $(A,1)^{m_1}$ and the second stack contains $(A,1)^{m_1}(B,1)^{n_1}(B,2)^{n_1}$ before reading the second segment of $a$'s in $q_3$. In $q_3$ the automaton should read all the remaining $a$'s, so the stack contents are $(A,1)^{m_1 - m_2}$ and $(A,1)^{m_1}(B,1)^{n_1}(B,2)^{n_1}(A,2)^{m_2}$ when the automaton is leaving the state $q_3$. In $q_4$ all the $A$'s moved on the previous step should be returned, so the stacks contain $(A,1)^{m_1 - m_2} (A, 3)^{m_2}$ and $(A,1)^{m_1}(B,1)^{n_1}(B,2)^{n_1}$ when the automaton enters $q_5$. Note that in $q_5$ the automaton must read all the $b$'s in the word in order to finish reading. So if this stage is successful the stacks contain $(A,1)^{m_1 - m_2} (A, 3)^{m_2}(B,3)^{n_2}$ and $(A,1)^{m_1}(B,1)^{m_2}(B,2)^{n_1-n_2}$. Since in $q_6$ the automaton executes only $\ppo$ operations there should be no $(A,1)$'s on the first stack and no $(B,2)$'s on the second stack implying that $m_1 = m_2$ and $n_1 = n_2$ which was required. The correctness of the automaton is proved.

Recall the definition of $S_X$-automata from Section \ref{section-monoid-displ}. Since the notion of simultaneous two-stack automata is just a reformulation of $S_X$-automata and the rank of the automata equals the rank of the generating set, the following theorem holds:

\begin{theorem}\label{STSA-theorem}
Simultaneous two-stack automata of rank $k$ recognize exactly the family of $(k-1)$-displacement context-free languages, which is the family of $k$-well-nested multiple context-free languages.
\end{theorem}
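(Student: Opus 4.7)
The plan is to observe that a simultaneous two-stack automaton of rank $k$ is just a reformulation of an $S_X$-automaton from Section~\ref{section-monoid-displ}, with $X = \Gamma$ viewed as a ranked alphabet of rank $k$ and $\aar$ carried over unchanged. The bridge is a dictionary between the four STSA commands on a symbol $\alpha$ of arity $r$ and the multibrackets $\alpha^1, \ldots, \alpha^r, \overline{\alpha}^1, \ldots, \overline{\alpha}^r$ of $B(X)$: PUSH $\alpha$ corresponds to $\alpha^1$; MOVE $\alpha$ at counter $i$ to $\overline{\alpha}^i$ for $1 \leq i < r$; RETURN $\alpha$ at counter $i$ to $\alpha^{i+1}$ for $1 \leq i < r$; and POP $\alpha$ to $\overline{\alpha}^r$. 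Under this translation, every STSA move multiplies the memory element of the corresponding $S_X$-automaton by exactly one generator of $B(X)$, and the requirement that both stacks be empty in the initial and final configurations matches precisely the requirement that the memory element equals the unit of $S_X$.

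For the inclusion from STSA to $S_X$-automata, I would check by induction on the length of a computation that the counter-tagged stack contents $(\beta_1, \beta_2)$ encode the polycyclic value $(\phi_1(\tau), \phi_2(\tau)) \in \mathcal{P}(A) \times \mathcal{P}(A)$, where $\tau$ is the multibracket string read off the trace. Under the identification of $(\alpha, 2j-1)$ on stack~1 with $a_{\alpha,j}$ and of $(\alpha, 2j)$ (or $(\alpha, 1)$) on stack~2 with $a_{\alpha, j+1}$ (or $a_{\alpha, 1}$), the four STSA rules become exactly the partial operations of the polycyclic monoid on each coordinate. Because $S_X$ embeds into $\mathcal{P}(A) \times \mathcal{P}(A)$ by construction, the pair of stacks is simultaneously empty if and only if the memory element of the $S_X$-automaton equals the unit of $S_X$, so the two automata accept the same words.

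For the converse, the same dictionary is applied in reverse, and the only delicate point is that an $S_X$-automaton may choose any generator of $B(X)$ at any transition, whereas the simulating STSA requires a specific counter value on top of each stack. Proposition~\ref{prop-chain} supplies exactly this guarantee: on any accepting computation each symbol participates in a unique chain cycle $\alpha^1, \overline{\alpha}^1, \ldots, \alpha^r, \overline{\alpha}^r$, which is precisely the order in which the translated commands PUSH, MOVE, RETURN, and POP leave the counters consistent. The principal obstacle of the proof is therefore bookkeeping — verifying that the counter parities and values are preserved under the translation in both directions — and once this is done, combining the two inclusions with the characterization of $S_X$-automata established at the end of Section~\ref{section-monoid-displ} yields the equivalence with $(k-1)$-displacement context-free languages, and hence with $k$-well-nested multiple context-free grammars.
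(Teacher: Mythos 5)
Your proposal is correct and takes essentially the same route as the paper: the paper's entire proof of this theorem is the one-line observation that a $k$-STSA is just a reformulation of an $S_X$-automaton for $X=\Gamma$ of rank $k$, so the result follows from the monoid characterization of Section~\ref{section-monoid-displ}. Your explicit dictionary between the four commands and the multibrackets of $B(X)$, and your appeal to Proposition~\ref{prop-chain} to show the counter constraints are automatically met on identity words, simply spell out the bookkeeping that the paper leaves implicit.
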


It follows that simultaneous two-stack automata of rank $2$ recognize exactly the family of tree-adjoining languages.

\section{Generalized simultaneous two-stack automata}

Though the introduced notion of simultaneous two-stack automata of rank $k$ directly corresponds to the notion of $(k-1)$-displacement context-free language, the formulation itself seems to be not satisfactory. The most disadvantage of the formulation is the lack of flexibility: note that, for example, the recognizing power of context-free languages remains the same, no matter either the lookup of the arbitrary finite number of top stack symbols is allowed, the lookup of only the top symbol is possible or there is no lookup at all. We want to gain the analogous flexibility in our case.

The first inconvenient restriction is that we are bound to push and pop the same symbols from both the stacks and it is not possible, for example, to push $A$ to the first stack and $B$ to the second. Analogously we cannot remove $A$ from the first stack adding $B$ to the second, the pushed symbol must be also $A$. If we weaken this restriction and allow to combine arbitrary symbols in such operations it is impossible to trace the rank of particular element of stack alphabet. However, we still want to distinguish, say, $2$-DCFLs from $3$-DCFLs so the notion of rank cannot be completely omitted. So we keep on associating a counter with every symbol on the stacks and incrementing this counter during every \mmo and \rre operation. This counter is required to be less than $2K$ during the computation, where $K$ is the rank of the automaton. The discussion above leads us to the following definition:

\begin{definition}
A generalized simultaneous two-stack automaton of rank $k$ ($k$-GSTSA) is a tuple $\Aa = \inang{Q, \Sigma, \Gamma, P, q_0, F}$ where $Q$ is a finite set of states, $\Sigma$ is a finite alphabet, $\Gamma$ is a finite stack alphabet, $P$ is the set of transitions, $q_0 \in Q$ is an initial state and $F \subseteq Q$ is a set of final states. Transitions has the form $(\inang{q_1, a} \to \inang{q_2, \tau, \alpha_1, \alpha_2})$, where $q_1, q_2$ are states, $a \in \Sigma \cup \{\epsilon\}$ is an input symbol (or an empty word) $\tau \in \inang{\ppu, \mmo, \rre, \ppo}$ is a command and $\alpha_1, \alpha_2 \in \Gamma$ are stack symbols.
\end{definition}

The notion of configuration for $k$-GSTSAs is the same that for usual $k$-STSAs, the configuration includes the current state, the suffix of input to be read and the contents of the stacks. Since we have changed the format of automaton commands we should also modify the transition relation.

\begin{definition}
A transition relation $\vdash_{\Aa}$ is the smallest transitive reflexive relation such that
\begin{itemize}
  \item If $(\inang{q_1, a} \to \inang{q_2, \ppu, \alpha_1, \alpha_2}) \in P$ then $\inang{q_1, au, \beta_1, \beta_2} \vdash \inang{q_2, u, \beta_1(\alpha_1,1), \linebreak\beta_2(\alpha_2,1)}$ for any words $u \in \Sigma^*$ and $\beta_1, \beta_2 \in \Sigman$.
  \item If $(\inang{q_1, a} \to \inang{q_2, \mmo, \alpha_1, \alpha_2}) \in P$ then $\inang{q_1, au, \beta_1 (\alpha_1,2i-1), \beta_2} \vdash \inang{q_2, u, \beta_1, \beta_2(\alpha_2,2i)}$ for any words $u \in \Sigma^*$ and $\beta_1, \beta_2 \in \Sigman$ and any counter value $i < k$.
  \item If $(\inang{q_1, a} \to \inang{q_2, \rre, \alpha_1, \alpha_2}) \in P$ then $\inang{q_1, au, \beta_1 , \beta_2(\alpha_1,2i)} \vdash \inang{q_2, u, \linebreak\beta_1(\alpha_2,2i+1), \beta_2}$ for any words $u \in \Sigma^*$ and $\beta_1, \beta_2 \in \Sigman$ and any counter value $i < k$.
  \item If $(\inang{q_1, a} \to \inang{q_2, \ppo, \alpha_1, \alpha_2}) \in P$ then $\inang{q_1, au, \beta_1(\alpha_1,2i-1), \beta_2(\alpha_2,1)} \vdash \inang{q_2, u, \beta_1, \beta_2}$ for any words $u \in \Sigma^*, \: \beta_1, \beta_2 \in \Sigman$ and any counter value $i < k$.
\end{itemize}
The language $L(\Aa)$ recognized by the automaton equals $L(\Aa) = \inbr{w \in \Sigma^* \mid \exists q \in F (\inang{q_0, w, \epsilon, \epsilon} \vdash \inang{q, \epsilon, \epsilon, \epsilon})}$.
\end{definition}
Note that we can simulate keeping transitions in the automaton as well as earlier.

We use the values of counters not only to trace the number of \mmo and \rre operations performed in a chain, but also use their parity for the same purpose as in the case of STSA-s. In fact, we want to keep the multibracket geometric structure of the stack contents since this structure reflects the order and embedding of constituents.

Now we want to prove that $k$-GSTSAs have the same recognizing power as $k$-STSAs for any natural $k$. First note that the latter are just a particular case of the former since we can set $\alpha_1 = \alpha_2$ in all the transitions of the automaton. To prove the opposite inclusion we again refer to multibracket sequences. In this case we will not embed this approach into monoid framework to escape unnecessary technicalities.

Let $A = \inbr{a_1, \overline{a}_1, \ldots, a_m, \overline{a}_m}$ be the alphabet of brackets and $Y \subseteq A \times A$ be the set of admissible pairs. For any letter $a \in Y$ we denote by $\pi_i(a), \: i = 1, 2$, its $i$-th coordinate. The mapping $\pi_i$ is naturally extended to words in $Y^*$, we call $\pi_i(w)$ the $i$-th projection of the word $w$. The notion of $k$-garland introduced below is a generalization of the notion of multibracket sequence for the case of arbitrary set $Y$. Recall that if $u$ is a correct multibracket sequence, then the contraction relation $R(u)$ consists of all such pairs $\inang{i,j}$ that the letters $u[i]$ and $u[j]$ contract with each other in $u$ when reducing it to an empty word. Note that $R$ is always a symmetric bijection and for every correct bracket sequence there is only one such relation. We define also an asymmetric contraction relation $R_<(u)$; a pair $\inang{i,j}$ belongs to $R_<(u)$ if it belongs to $R(u)$ and the inequality $i < j$ holds.

\begin{definition}\label{garland-def}
The word $w \in Y^*$ is a $k$-garland over the alphabet $Y$ if the following conditions hold:
\begin{enumerate}
\item $\pi_1(w), \pi_2(w)$ are correct bracket sequences.
\item If $i_1, j_1, i_2, j_2$ are indexes such that $j_1 < i_2$, $(i_1, j_1), (i_2, j_2) \in R(\pi_1(u))$ and $(j_1, i_2) \in R(\pi_2(u))$ then either $i_1 < j_1 < i_2 < j_2$, $j_1 < i_1  < i_2 < j_2$ or $i_1 = j_2$ (in this case also $j_1 = i_2$).
\item If $i_1 < j_1 < i_2 < j_2 < \ldots < i_l < j_l$ is an ascending chain of indexes such that $(i_t, j_t) \in R(\pi_1(w))$ for any $t \leq l$ and $(j_t, i_{t+1})$ for any $t < l$ then the inequality $l \leq k$ holds.
\end{enumerate}
\end{definition}

Let $R_0(w)$ define the relation $(R_<(\pi_1(w)) \cup R(\pi_2(w)))^*$. Then the following lemma holds:

\begin{lemma}\label{garland-cycle-ex}
Any vertex in the set $Pos(w) = \overline{0,|w|-1}$ belongs to some simple cycle in the graph $G_R = \inang{Pos(w), R_0}$.
\end{lemma}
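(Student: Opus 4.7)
The plan is to view $G_R$ as the superposition of two perfect matchings on $Pos(w)$ and then invoke the classical fact that a finite 2-regular graph is a disjoint union of simple cycles.

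First I would observe that since $\pi_1(w)$ and $\pi_2(w)$ are correct bracket sequences by condition 1 of Definition \ref{garland-def}, each contraction relation $R(\pi_l(w))$ for $l = 1, 2$ is a fixed-point-free involution on $Pos(w)$, i.e.\ a perfect matching: every position $i$ has a unique partner, which I denote $\sigma_l(i)$. The set $R_<(\pi_1(w))$ is simply this matching presented as ordered pairs (smaller index first), so as a set of unordered edges it coincides with the matching underlying $R(\pi_1(w))$, and the symmetric relation $R(\pi_2(w))$ is likewise a matching.

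Next I would form the undirected multigraph on vertex set $Pos(w)$ whose edges are drawn from $R_<(\pi_1(w)) \cup R(\pi_2(w))$. Every vertex $i$ is incident to exactly one $\pi_1$-edge (to $\sigma_1(i)$) and one $\pi_2$-edge (to $\sigma_2(i)$), so the graph is 2-regular. A finite 2-regular graph decomposes into vertex-disjoint simple cycles, and therefore every vertex lies on one such cycle; concretely, starting at $i$ and alternately applying $\sigma_1$ and $\sigma_2$ traces out the cycle through $i$, and the orbit closes up since both maps are bijections of the finite set $Pos(w)$.

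The one delicate point I would treat carefully is the degenerate case $\sigma_1(i) = \sigma_2(i)$, which produces two parallel edges between $i$ and its common partner and hence a 2-cycle in the multigraph; this is still a legitimate simple cycle because the two edges have different types. I would also briefly unpack the convention underlying $R_0 = (R_<(\pi_1(w)) \cup R(\pi_2(w)))^*$: the graph $G_R$ should be understood with the generating pairs as its edge set, since otherwise simple cycles would become vacuous under the reachability closure. The combinatorial content is then purely the classical decomposition theorem for 2-regular graphs, so the main obstacle is really only notational bookkeeping; conditions 2 and 3 of Definition \ref{garland-def} are not used for this lemma and will presumably enter later when identifying these abstract cycles with the chain cycles of Proposition \ref{prop-chain}.
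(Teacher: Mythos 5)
Your reduction to the superposition of two perfect matchings proves only the \emph{undirected} version of the statement, which is weaker than what the lemma asserts and than what Lemma \ref{garland-cycle} later needs. The relation $R_<(\pi_1(w))$ is deliberately asymmetric: a contraction pair of $\pi_1(w)$ contributes an edge of $G_R$ only in the direction from the smaller position to the larger one, whereas the $\pi_2$-edges are traversable both ways. A simple cycle in $G_R$ must therefore traverse every $\pi_1$-edge from its opening position to its closing position, and this orientability is exactly where condition 2 of Definition \ref{garland-def} enters --- the paper's proof spends its one nontrivial step deducing $i_2 < j_2$ from that condition while extending an alternating path, and then closes the path into a cycle by finiteness. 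Your closing remark that conditions 2 and 3 are not used is thus the gap: the $2$-regularity argument cannot see the orientation of the $\pi_1$-edges.

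Concretely, take $|w| = 4$ with $\pi_1(w) = ab\overline{b}\overline{a}$ and $\pi_2(w) = c\overline{c}d\overline{d}$. Both projections are correct bracket sequences, so your hypotheses hold, and the superposition of the two matchings is the single undirected $4$-cycle $0 - 3 - 2 - 1 - 0$. However $R_<(\pi_1(w)) = \{(0,3),(1,2)\}$, so in $G_R$ the only arc leaving vertex $2$ is the $\pi_2$-arc to $3$ and the only arc leaving $3$ is the $\pi_2$-arc back to $2$: no cycle through $2$ avoids traversing an edge in both directions, so the conclusion of the lemma fails. This $w$ is of course not a garland --- condition 2 applied to $i_1 = 3$, $j_1 = 0$, $i_2 = 1$, $j_2 = 2$ is violated --- but that is precisely the point: the statement is false without condition 2, so no proof that ignores it can be complete. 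To repair your argument you would need to show that the alternating cycle produced by the matching decomposition admits an orientation in which every $\pi_1$-edge is traversed in increasing direction, and establishing that is the actual content of the paper's proof.
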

\begin{proof}
Since the number of vertexes is finite, it suffices to proof that every edge in $R_0$ belongs to some infinite path with no edges traversed in both directions. Then it suffices to show that there is in infinite path in $G_R$ with the edges from $R_<(\pi_1(w))$ (we call them the edges of the second type) and the edges from $R(\pi_2(w))$ (the edges of the second type) being alternated. Let us start from an arbitrary edge $(i_1, j_1)$ of the first type and show we can always add two more edges. Indeed, there is some edge $(j_1,i_2)$ of the second type because the $R(\pi_2(w))$ is a bijection. Then there is an edge $(i_2, j_2) \in R(\pi_1(w))$, we need to show that $i_2 < j_2$. In both the cases it follows from the second part of the definition of $k$-garland. Then we have added to more edges to the path and the lemma is proved.
\end{proof}

\begin{lemma}\label{garland-cycle}
If $w$ is a $k$-garland, then every vertex $i \in Pos(w)$ belongs to some cycle in the graph $G_R = \inang{Pos(w), R}$ containing the indexes $i_1 < j_1 < \ldots < i_l < j_l$ such that for any $t \leq l$ it holds that $(i_t, j_t) \in R(\pi_1(w))$ and for any $t < l$ it holds that $(j_t, i_{t+1})$ belongs to $R(\pi_2(w))$. It also holds that $(j_t, i_1) \in R(\pi_2(w))$ and $l \leq k$.
\end{lemma}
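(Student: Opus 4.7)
I begin by applying Lemma~\ref{garland-cycle-ex} to the given vertex $i$: this furnishes a simple cycle $C = v_1 v_2 \cdots v_{2l} v_1$ in $G_R$ through $i$ whose edges strictly alternate between the first type (from $R_<(\pi_1(w))$) and the second type (from $R(\pi_2(w))$). After a cyclic shift I may assume every first-type edge has the form $(v_{2t-1}, v_{2t})$ --- so $v_{2t-1} < v_{2t}$ comes for free --- while every second-type edge has the form $(v_{2t}, v_{2t+1})$, with the cyclic convention $v_{2l+1} = v_1$. What remains is to show that, except for the closing edge, the second-type edges also advance ($v_{2t} < v_{2t+1}$ for $t < l$), and that the resulting length satisfies $l \leq k$.

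The crucial step is the advance property, proved by contradiction using condition~2 of Definition~\ref{garland-def}. Suppose $v_{2t+1} < v_{2t}$ for some $t < l$. The pairs $\{v_{2t-1}, v_{2t}\}, \{v_{2t+1}, v_{2t+2}\} \in R(\pi_1(w))$ together with the edge $\{v_{2t}, v_{2t+1}\} \in R(\pi_2(w))$ match exactly the configuration of condition~2, provided one uses the \emph{reversed} labelling $(i_1, j_1) := (v_{2t+2}, v_{2t+1})$ and $(i_2, j_2) := (v_{2t}, v_{2t-1})$. Then the hypothesis $j_1 < i_2$ translates into $v_{2t+1} < v_{2t}$ (our assumption), and $(j_1, i_2) \in R(\pi_2(w))$ holds by symmetry of $R(\pi_2(w))$. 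The three alternatives of condition~2 read $v_{2t+2} < v_{2t+1} < v_{2t} < v_{2t-1}$, or $v_{2t+1} < v_{2t+2} < v_{2t} < v_{2t-1}$, or $v_{2t+2} = v_{2t-1}$ together with $v_{2t+1} = v_{2t}$. Each of them collides either with a known first-type inequality ($v_{2t-1} < v_{2t}$ or $v_{2t+1} < v_{2t+2}$) or with the simplicity of $C$. Therefore $v_{2t} < v_{2t+1}$ for every $t < l$, and we obtain $v_1 < v_2 < \cdots < v_{2l}$.

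Setting $i_t := v_{2t-1}$ and $j_t := v_{2t}$ gives the claimed chain $i_1 < j_1 < \cdots < i_l < j_l$ with $(i_t, j_t) \in R(\pi_1(w))$ for $t \leq l$ and $(j_t, i_{t+1}) \in R(\pi_2(w))$ for $t < l$, while the cycle-closing edge $(v_{2l}, v_1)$ yields precisely $(j_l, i_1) \in R(\pi_2(w))$. Feeding this ascending chain into condition~3 of Definition~\ref{garland-def} immediately gives $l \leq k$. The principal obstacle is identifying the correct labelling in the contradiction step: since condition~2 is stated with a one-sided hypothesis $j_1 < i_2$, one cannot plug the cycle data into it in the ``natural'' direction once a second-type edge moves backwards; the trick is to swap the two $R(\pi_1(w))$-pairs and place their \emph{far} endpoints into $i_1$ and $j_2$, after which each of condition~2's three conclusions cleanly contradicts the forward orientation of the adjacent first-type edges.
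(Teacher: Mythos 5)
Your overall plan---pull the alternating simple cycle out of Lemma~\ref{garland-cycle-ex}, show every edge except the closing one advances, then read off the chain and get $l\leq k$ from condition~3---is sensible, and it correctly isolates the real content of the lemma in the single claim that interior second-type edges advance. But your justification of that claim does not survive scrutiny. It leans on the \emph{literal} wording of the second alternative of condition~2 of Definition~\ref{garland-def}, namely ``$j_1 < i_1 < i_2 < j_2$'', whose tail ``$i_2 < j_2$'' is what kills your case (b). That wording cannot be what condition~2 means: the paper's own proof of Proposition~\ref{id-garlands} derives in the pop/push case only ``$j_1 < i_1$ and $j_2 < i_2$'' (note $j_2 < i_2$, the opposite inequality), and the stack image of the accepting computation in Example~\ref{example-copy}---a genuine $2$-garland by Proposition~\ref{id-garlands}---violates the literal condition~2 on its outermost second-stack arc, where both attached first-stack arcs open to the right. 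Reading the second alternative as ``$j_1<i_1$ and $j_2<i_2$'', your relabelling $(I_1,J_1)=(v_{2t+2},v_{2t+1})$, $(I_2,J_2)=(v_{2t},v_{2t-1})$ makes that alternative assert exactly $v_{2t+1}<v_{2t+2}$ and $v_{2t-1}<v_{2t}$, i.e.\ the inequalities you already have, so no contradiction arises. This failure is unavoidable: a backward second-type edge flanked by two forward first-type arcs is precisely the local picture at the legitimate closing edge $(v_{2l},v_1)$ of every chain, so no argument inspecting only one second-type edge and its two neighbouring arcs can exclude it in the interior. A telling symptom is that your argument applied verbatim to $t=l$ (where $v_1<v_{2l}$ plays the role of $v_{2t+1}<v_{2t}$) would ``prove'' that the cycle cannot close at all.

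What is actually required---and what the paper does---is a global argument: anchor the cycle at its leftmost vertex, take the maximal ascending alternating path starting there, and use the planarity (non-crossing nesting) of the two contraction relations, as in the proof of Proposition~\ref{prop-chain}, to show that any continuation of the cycle beyond that path would be trapped strictly inside an arc already traversed and so could never return to the leftmost vertex. The advance of the interior second-type edges is a corollary of that nesting induction, not a local consequence of condition~2. (A smaller point: the forward orientation of the first-type edges is also not free from ``a simple cycle in $G_R$''---it comes from the way the alternating cycle is constructed in Lemma~\ref{garland-cycle-ex}; you should say so rather than attribute it to a cyclic shift.)
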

\begin{proof}
Consider the cycle which contains $i$, such a cycle exists due to Lemma \ref{garland-cycle-ex}. Take the leftmost vertex $i_0$ in this cycle and consider the longest ascending path containing $i_0$, according to the definition of $R_0(w)$ it starts and ends with en edge of the first type. Then the proof of the statement $(j_t, i_1) \in R(\pi_2(w))$ repeats the proof of the Proposition \ref{prop-chain}. The condition $l \leq k$ follows from the definition of $k$-garland.
\end{proof}

Since the structure of states is the same for automata of all kinds, we should concentrate on the structure of their transitions. Let $\mathcal{T}$ be some transition of the generalized two-stack simultaneous automaton $\Aa = \inang{Q, \Sigma, \Gamma, P, q_0, F}$. Its stack image of $\psi(\mathcal{T})$ is a pair of symbols in the alphabet $\Gamma \cup \{\overline{A} \mid A \in \Gamma\}$ defined as follows:
\begin{enumerate}
  \item If $\mathcal{T} = (\inang{q_1, a} \to \inang{q_2, \ppu, \alpha_1, \alpha_2})$ then $\psi(\mathcal{T}) = \inang{\alpha_1, \alpha_2}$,
  \item If $\mathcal{T} = (\inang{q_1, a} \to \inang{q_2, \mmo, \alpha_1, \alpha_2})$ then $\psi(\mathcal{T}) = \inang{\overline{\alpha}_1, \alpha_2}$,
  \item If $\mathcal{T} = (\inang{q_1, a} \to \inang{q_2, \rre, \alpha_1, \alpha_2})$ then $\psi(\mathcal{T}) = \inang{\alpha_1, \overline{\alpha}_2}$,
  \item If $\mathcal{T} = (\inang{q_1, a} \to \inang{q_2, \ppo, \alpha_1, \alpha_2})$ then $\psi(\mathcal{T}) = \inang{\overline{\alpha}_1, \overline{\alpha}_2}$.
\end{enumerate}

We denote by $\psi(\Aa) = \{ \psi(\mathcal{T} \mid \mathcal{T} \in P\}$ the set of stack images of the transitions of the automaton $\Aa = \inang{Q, \Sigma, \Gamma, P, q_0, F}$. Two transitions of the GSTSA are called consecutive if the destination set of the first transition equals the source set of the second one. We call a computation a sequence of consecutive transitions. The computation is identity-preserving if there is nothing in the stacks after its termination provided the stacks are empty before it starts. Note that a word $w$ is accepted by an automaton iff there is an identity-preserving computation of this automaton which starts in the initial state, terminates in some of the final states and reads exactly the word $w$.

\begin{definition}
The stack image $\psi(\mathcal{C}))$ of the computation $\mathcal{C} = \mathcal{T}_1 \ldots \mathcal{T}_r$ is the sequence $\psi(\mathcal{T}_1) \ldots \psi(\mathcal{T}_r)$.
\end{definition}

\begin{proposition}\label{id-garlands}
The identity-preserving computations of the $k$-GSTSA $\Aa = \inang{Q, \Sigma, \Gamma, P, q_0, F}$ are exactly all $k$-garlands over the set $\psi(\Aa)$.
\end{proposition}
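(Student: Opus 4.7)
The proof splits into two inclusions. For the forward direction, assuming $\mathcal{C}$ is an identity-preserving computation with $w = \psi(\mathcal{C})$, I would verify the three conditions of Definition~\ref{garland-def} one by one. Condition~1 falls out of the case analysis on the four transition types: $\psi(\mathcal{T})$ places a bar in coordinate $i$ precisely when $\mathcal{T}$ pops stack~$i$ and leaves no bar when it pushes, so $\pi_1(w)$ and $\pi_2(w)$ track the push--pop history of the two stacks; since both stacks begin and end empty, both projections are correct bracket sequences. For condition~3, I would follow the life cycle of a single stack cell: after the \ppu that introduces it (counter~$1$ on both stacks), it alternates between the stacks via \mmo and \rre, its counter growing by one at each transfer, and is finally removed by a \ppo. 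The chain $i_1 < j_1 < \ldots < i_l < j_l$ of the definition is exactly the sequence of positions at which this cell changes stack or is popped, and the counter bound in the transition rules caps $l$ by $k$. Condition~2 is the stack-discipline statement: a cell on top of a stack cannot be ``looked past'' until it is removed, which forces the relative positions of two bracket-pairs connected through $R(\pi_2(w))$ into one of the three listed patterns.

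For the reverse direction, given a $k$-garland $w$, I would reconstruct a computation letter by letter. The sign pattern of $w[p]$ uniquely selects one of the four operation types (\ppu when both coordinates are unbarred, \mmo when only the first is barred, and so on), which together with the stack-alphabet letters readable from $w[p]$ after stripping bars picks a transition $\mathcal{T}_p \in P$; this is possible precisely because $w[p] \in \psi(\Aa)$. Lemma~\ref{garland-cycle} partitions $Pos(w)$ into chain-cycles $i_1 < j_1 < \ldots < i_l < j_l$ of length $l \leq k$, and along such a cycle I would assign counter $2t{-}1$ to each $i_t$ (arrival on stack~1) and counter $2t$ to each $j_t$ (arrival on stack~2), with counter~$1$ reappearing on stack~2 at $j_l$ carried over from the opening \ppu at $i_1$. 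The key claim is that $\mathcal{T}_1 \ldots \mathcal{T}_{|w|}$ then runs correctly from empty stacks and terminates with empty stacks, which reduces to the invariant that just before $\mathcal{T}_p$ executes, both stack tops contain exactly the symbols and counters $\mathcal{T}_p$ expects. I would prove this invariant by induction on $p$, using Lemma~\ref{garland-cycle} to identify the cell currently on each stack top and condition~2 of Definition~\ref{garland-def} to guarantee that no other cycle interferes.

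The main obstacle I foresee is the inductive step of the reverse direction. At each moment the top of each stack depends on the interaction among cycles in the partition --- cells from different cycles may lie nested, side by side, or in any planar configuration compatible with condition~2 --- and translating the geometric garland intuition into a rigorous induction requires describing the joint state of the two stacks in terms of the currently ``active'' cycles. Once this bookkeeping is in place, checking that each of the four transition types behaves as required becomes routine case analysis, and condition~3 ensures the counter values needed at each step stay within the range permitted by the transition rules.
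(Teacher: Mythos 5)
Your overall decomposition matches the paper's: the direction from identity-preserving computations to $k$-garlands is proved by checking the three clauses of Definition~\ref{garland-def}, and your treatments of clauses 1 and 3 (the projections record the push--pop history of each stack, so empty-to-empty forces correct bracket sequences; the life cycle of a single cell together with the counter bound caps the chain length) are exactly the paper's arguments. The one genuine gap is your justification of clause 2. You derive it from ``stack discipline: a cell on top of a stack cannot be looked past until it is removed,'' but LIFO order only yields planarity of each contraction relation $R(\pi_1(w))$, $R(\pi_2(w))$ separately, and planarity does not exclude the forbidden pattern. Concretely, $i_1 < j_1 < j_2 < i_2$ with $(i_1,j_1),(i_2,j_2)\in R(\pi_1(w))$ and $(j_1,i_2)\in R(\pi_2(w))$ is perfectly consistent with stack discipline on both stacks (the two stack-1 pairs are disjoint), yet it violates clause 2. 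What actually rules it out is the counter discipline: \ppo removes from the second stack only a cell with counter $1$, which only \ppu deposits there, while \rre removes only even counters, which only \mmo deposits. Hence if $i_2$ is a \ppo then $j_1$ must be a \ppu (forcing $j_1<i_1$ and $j_2<i_2$), and if $i_2$ is a \rre then $j_1$ must be a \mmo (forcing $i_1<j_1$ and $i_2<j_2$); the listed sign patterns follow. This parity case analysis is the entire substance of the paper's proof of clause 2, and it is the piece your sketch is missing.

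For the converse direction the paper is far terser than you are: it only observes that the projections of a garland are correct bracket sequences, so the corresponding operation sequence returns both stacks to empty. Your instinct that the symbols and counters also need checking is sound, but the obstacle you foresee --- describing the joint stack state in terms of interacting cycles --- largely dissolves. At any moment the top of stack $i$ is just the innermost unclosed bracket of the corresponding prefix of $\pi_i(w)$, independently of how the chain-cycles interleave, and the symbol matches automatically because the bracket types of $\pi_i(w)$ are the stack symbols themselves. The only datum not visible in a single projection is the counter of that top cell, and it is supplied exactly as you propose by Lemma~\ref{garland-cycle}: the cell entering stack 1 at the $t$-th stack-1 position of its cycle carries counter $2t-1$, its stack-2 companion carries $2t$ (or $1$ at the opening \ppu), clause 2 guarantees that the operation type read off from the bar pattern of $w[p]$ agrees with the type forced by the cycle, and clause 3 keeps every counter within the range the transition rules permit. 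With that observation your induction closes with no cross-cycle bookkeeping, and your argument becomes a complete (and more explicit) version of what the paper asserts in one sentence.
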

\begin{proof}
Consider some sequence of \kav{push} and \kav{pop} operations executed on a single stack. The emptiness of the stack if preserved under this sequence of operations iff the sequence maps to a correct bracket sequence under a natural encoding of operations. Since the projections of $k$-garlands are correct bracket sequences every $k$-garland is identity-preserving.

The opposite implication uses the specificity of $k$-GSTSA operations. Let a computation be identity-preserving then the first part of the $k$-garland definition is obviously valid. Let $R_i, \: i= 1, 2$ denote the contraction relation of the sequence of operations on the $i$-th stack. If $(i_1, j_1), (i_2, j_2) \in R_1, \: (j_1, i_2) \in R_2$ and $j_1 < i_2$; it means that in the $i_2$-th step of the computation we pop from the second stack the element pushed there on the $j_1$-th step. There are two possibilities: first, if this pop is a part of the \rre operation then by the definition of GSTSA only the \mmo operation is possible in the $j_1$-th transition of the computation, also the symbol pushed on the first stack during the \rre operation must be removed somewhen later. It means that $i_1 < j_1$ and $i_2 < j_2$. The second variant is that the \ppo operation is executed on the $i_2$-th step, it implies that the operation on the step $i_1$ is \ppu which implies $j_1 < i_1$ and $j_2 < i_2$. Both possibilities are allowed in the definition of $k$-garland so the second step is proved. To prove the third part of the definition note that all the intermediate elements of the ascending chains considered in that part are linked by \mmo and \rre operations. Since every such operation increments the value of the same counter the number of intermediate operations is not greater then $2k - 2$ and the total number of vertexes in this chain is not greater then $2k$ which was required. The lemma is proved.
\end{proof}

\begin{corollary}\label{GSTSA-trans}
For any $k$-GSTSA $\Aa$ the language $L(\Aa)$ is a rational transduction of the set of $k$-garlands over the alphabet $\psi(\Aa)$.
\end{corollary}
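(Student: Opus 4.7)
The plan is to produce a finite transducer $\mathcal{T}$ over the input alphabet $\psi(\Aa)$ and output alphabet $\Sigma$ such that $\mathcal{T}$ maps the set of $k$-garlands over $\psi(\Aa)$ onto exactly $L(\Aa)$. Since rational transductions are realized by finite transducers, this suffices to establish the corollary.

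First I would define $\mathcal{T} = \inang{Q, \psi(\Aa), \Sigma, P', q_0, F}$ with the same states, initial state, and final states as $\Aa$. For every transition $\mathcal{T}_i = (\inang{q_1, a} \to \inang{q_2, \tau, \alpha_1, \alpha_2}) \in P$ of the automaton I would put a transducer edge from $q_1$ to $q_2$ that reads the letter $\psi(\mathcal{T}_i) \in \psi(\Aa)$ and outputs the word $a \in \Sigma \cup \{\epsilon\}$. This construction is a standard finite-state transducer, and its input-output relation is rational. Let $G_k \subseteq \psi(\Aa)^*$ denote the set of $k$-garlands; I want to show $\mathcal{T}(G_k) = L(\Aa)$.

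Next I would verify the two inclusions. For $L(\Aa) \subseteq \mathcal{T}(G_k)$, take any $w \in L(\Aa)$; then there is an identity-preserving computation $\mathcal{C} = \mathcal{T}_{i_1} \ldots \mathcal{T}_{i_r}$ from $q_0$ to some $q \in F$ reading $w$. By Proposition \ref{id-garlands}, its stack image $\psi(\mathcal{C})$ is a $k$-garland over $\psi(\Aa)$, and by construction $\mathcal{T}$ on input $\psi(\mathcal{C})$ has a successful run from $q_0$ to $q$ outputting exactly the concatenation of input symbols of the $\mathcal{T}_{i_j}$, which is $w$. Conversely, for $\mathcal{T}(G_k) \subseteq L(\Aa)$, given a $k$-garland $u \in G_k$ and a successful transducer run producing output $w$, the choice of transducer edge at each letter of $u$ singles out a sequence of automaton transitions $\mathcal{C}$ with $\psi(\mathcal{C}) = u$. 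Because $u$ is a $k$-garland, Proposition \ref{id-garlands} guarantees that $\mathcal{C}$ is a legal identity-preserving computation of $\Aa$ (not merely a syntactically-consistent sequence of transitions): the counter parities and the matching conditions needed to fire each \mmo, \rre, and \ppo are exactly what the garland conditions encode. Hence $w \in L(\Aa)$.

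The only delicate point is the converse direction, where one must argue that the sequence of transitions recovered from reading a garland is genuinely executable, not just consecutive in state. This is precisely where Proposition \ref{id-garlands} carries the weight: it was stated as an \emph{if and only if} between identity-preserving computations and $k$-garlands, so no extra combinatorics on counters or stack contents is required here. The rest of the argument is routine bookkeeping about finite transducers and rational transductions.
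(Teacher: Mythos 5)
Your proposal is correct and follows essentially the same route as the paper: the paper's proof simply observes that $L(\Aa)$ is a rational transduction of the set of identity-preserving computations and then invokes Proposition~\ref{id-garlands}, which is exactly the argument you have spelled out in detail via the explicit transducer. Your extra care on the converse direction (that a transition sequence recovered from a garland is genuinely executable, which works because the stack image determines the operation type and symbols, so executability depends only on $\psi(\mathcal{C})$) is a useful elaboration of what the paper leaves implicit, but it is not a different approach.
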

\begin{proof}
Evidently $L(\Aa)$ is the rational transduction of the set of identity-preser\-ving computations. Then we should apply the Proposition \ref{id-garlands}.
\end{proof}

\begin{lemma}\label{STSA-garland}
The set of $k$-garlands over the alphabet $\psi(\Aa)$ is recognized by some $k$-STSA.
\end{lemma}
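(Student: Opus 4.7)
The plan is to recognize the $k$-garlands over $Y = \psi(\Aa)$ as a $(k-1)$-displacement context-free language and then appeal to Theorem~\ref{STSA-theorem} to obtain the desired $k$-STSA. I would construct a $(k-1)$-DCFG $G_Y$ generating exactly the $k$-garlands, directly modelled on the grammar for $D(X)$ exhibited before Theorem~\ref{Chomsky}, but with multibrackets replaced by ``chain signatures'' drawn from $Y$. Call $(y_1, \ldots, y_{2l}) \in Y^{2l}$ a valid chain signature of length $l \le k$ if the $y_i$ play the roles dictated by Proposition~\ref{prop-chain} and Lemma~\ref{garland-cycle}: $y_1$ is of PUSH type (both coordinates unoverlined), $y_{2t}$ of MOVE type and $y_{2t+1}$ of RETURN type for $1 \le t \le l-1$, $y_{2l}$ of POP type, and the first and second coordinates of consecutive entries are mutual inverses as required by the cycle structure of $R(\pi_1)$ and $R(\pi_2)$. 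Since $|Y|$ and $k$ are finite, there are only finitely many such signatures.

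The grammar $G_Y$ uses nonterminals $S_0, \ldots, S_{k-1}$ with $rk(S_i) = i$, the auxiliary rules $S_0 \to \epsilon$, $S_1 \to 1$, $S_{i+j} \to S_i \cdot S_j$ (for $i + j \le k - 1$) and $S_{i+j-1} \to S_i +_l S_j$ (for $l \le i$, $i + j \le k$), together with one chain rule per valid chain signature, of exactly the same shape as the multibracket rule:
$$S_{l-1} \to y_1 \underbrace{(\ldots(}_{l-1\text{ times}} S_{l-1} +_1 (y_2 \, 1 \, y_3)) +_2 \ldots) +_{l-1} (y_{2l-2} \, 1 \, y_{2l-1})) y_{2l}.$$
I would prove $L(G_Y)$ coincides with the $k$-garlands by two inductions. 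The inclusion $L(G_Y) \subseteq \{k\text{-garlands}\}$ is routine: each chain rule creates a single new chain of length at most $k$ whose $\pi_1$- and $\pi_2$-pairings are correct by the validity constraints on the signature, while the concatenation and intercalation rules preserve both projections as correct bracket sequences and combine chain partitions without creating crossings or chains longer than $k$. For the reverse inclusion, given a $k$-garland $w$, Lemma~\ref{garland-cycle} supplies the partition of $Pos(w)$ into chains of length at most $k$; I would induct on $|w|$ by choosing an outermost chain $C$ (one whose positions are not strictly nested inside the interior intervals of another chain), applying the corresponding chain rule, and recursing on each subword lying between consecutive positions of $C$ or disjoint from $C$.

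Once $L(G_Y) = \{k\text{-garlands over } Y\}$ is established, Theorem~\ref{STSA-theorem} immediately produces a $k$-STSA recognizing this language. The main obstacle is the reverse inclusion of the first step: one must argue, essentially from condition~2 of Definition~\ref{garland-def}, that the chain partition of any $k$-garland forms a forest under the natural containment ordering, so that an outermost chain always exists and each interior subword is itself a $k$-garland to which the induction can be applied.
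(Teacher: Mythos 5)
Your proposal is correct and follows essentially the same route as the paper: the paper likewise treats the finite set of possible closed chains as a ranked alphabet $\mathcal{D}$, observes that the $k$-garlands are the homomorphic image of the generalized Dyck language $D(\mathcal{D})$ (hence a $(k-1)$-DCFL), and then invokes Theorem~\ref{STSA-theorem}. Your explicit grammar over ``chain signatures'' is exactly the image of the grammar $G_{\mathcal{D}}$ under that homomorphism, and the correctness argument you outline --- including the non-crossing, forest-like structure of the chain partition supplied by Lemma~\ref{garland-cycle} --- is precisely the content the paper compresses into ``it is easy to prove.''
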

\begin{proof}
Consider the finite set $\mathcal{D}$ of all possible closed chains in $k$-garlands and some chain $d \in \mathcal{D}$. Let $l(d)$ denote its number of vertexes in the chain and $d[i]$ denote its $i$-th leftmost vertex. Consider $\mathcal{D}$ as the ranked alphabet with the arity function $l$ and define the set of multibrackets $B(X) = \inbr{d[i] \mid d \in \mathcal{D}, 1 \leq i \leq l(D)}$. It is easy to prove that
the set of $k$-garlands is the homomorphic image of the generalized Dyck language $D(\mathcal{D})$ of correct multibracket sequences which is a $(k-1)$-DCFL. Then it is recognized by some $k$-STSA due to Theorem \ref{STSA-theorem}.
\end{proof}

\begin{theorem}\label{GSTSA-STSA}
Any language recognized by some $k$-GSTSA is recognized by some $k$-STSA.
\end{theorem}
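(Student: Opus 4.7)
The plan is to chain together the two preparatory results (Corollary \ref{GSTSA-trans} and Lemma \ref{STSA-garland}) with the closure of $(k-1)$-DCFLs under rational transductions, which was used implicitly in the proof that $k$-STSAs capture exactly the $(k-1)$-DCFLs (Theorem \ref{STSA-theorem}).

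First, fix a $k$-GSTSA $\Aa = \inang{Q, \Sigma, \Gamma, P, q_0, F}$. By Corollary \ref{GSTSA-trans}, the language $L(\Aa)$ is a rational transduction of the set $\Gamma_k(\psi(\Aa))$ of $k$-garlands over the finite alphabet $\psi(\Aa) \subseteq (\Gamma \cup \overline{\Gamma}) \times (\Gamma \cup \overline{\Gamma})$. By Lemma \ref{STSA-garland}, $\Gamma_k(\psi(\Aa))$ is recognized by some $k$-STSA. Applying Theorem \ref{STSA-theorem} to this $k$-STSA, we conclude that $\Gamma_k(\psi(\Aa))$ is a $(k-1)$-displacement context-free language.

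Next I invoke the fact that the family of $(k-1)$-DCFLs is closed under rational transductions; this is the same closure property used in the proof of Theorem \ref{STSA-theorem} to conclude that the languages recognized by $S_X$-automata are all displacement context-free, and it follows from the characterization of DCFLs as the rational transductions of the generalized Dyck languages $D(X)$ of rank $k+1$ (Theorem \ref{Chomsky}), since a composition of rational transductions is again a rational transduction. Consequently $L(\Aa)$, being a rational transduction of a $(k-1)$-DCFL, is itself a $(k-1)$-DCFL. Applying Theorem \ref{STSA-theorem} in the opposite direction then yields a $k$-STSA recognizing $L(\Aa)$.

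The main conceptual work has already been done in the previous two statements: Corollary \ref{GSTSA-trans} reduces the GSTSA model to a combinatorial object (garlands), while Lemma \ref{STSA-garland} realizes this object as an STSA-recognizable language via the Dyck-language construction. The only remaining step in the proof is to assemble these pieces and cite rational-transduction closure, so no new obstacle arises here; the result is essentially a corollary.
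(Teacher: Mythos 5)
Your proposal is correct and follows essentially the same route as the paper: it combines Corollary \ref{GSTSA-trans} with Lemma \ref{STSA-garland} and the closure of the $k$-STSA-recognizable languages (equivalently, the $(k-1)$-DCFLs) under rational transductions. The paper simply states that closure property directly, whereas you unfold it through Theorem \ref{STSA-theorem} and Theorem \ref{Chomsky}; this is the same argument, just made more explicit.
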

\begin{proof}
The languages recognized by $k$-STSAs are closed under rational transductions. By Corollary \ref{GSTSA-trans} it suffices to show that the language of $k$-garlands is recognized by a $k$-STSA which was proved in Lemma \ref{STSA-garland}.
\end{proof}

We have proved that the permission for STSA commands to combine arbitrary pairs of symbols does not affect its recognizing power. It is worth mentioning that in fact $k$-garlands are a natural generalization of multibracket sequence under the same permission so the method of the Section \ref{section-monoid-displ} can also be used to find another version of Chomsky-Sch\"{u}tzenberger theorem for the DCFLs.

\section{Blind and sighted automata}

There is another major disadvantage in our initial definition of STSA: the automaton is not able to observe top symbols on the stacks during its computation. Certainly, these symbols are significant in the case of \ppo operation since the automaton halts if the command to execute is, say, $\inang{\ppo, A, B}$ and current top symbols are $C$ and $D$. In the same way the \mmo command takes into account the content of the first stack, as well as the \rre operation --- of the second. However, there is no possibility to refer to the top elements of the stack in the case of \ppu operation. This limitation seems to be unnatural and unpleasant, so we should find some way of modifying the automaton to overcome this difficulty.

Let us first discuss the same problem in the case of usual pushdown automaton. Assume we have a command of the kind \kav{in the state $q_1$ if $A$ is the top symbol of the stack then read $a$ from the input stream, push $B$ to the stack and move to the state $q_2$} (we abbreviate this by $\inang{q, a, A} \to \inang{q_2, \ppu, B}$). The common way to simulate this instruction is to create two fresh states $q'$ and $q''$ and add the following transitions: $(\inang{q_1, \epsilon, \epsilon} \to \inang{q', \ppo, A}), (\inang{q', \epsilon, \epsilon} \to \inang{q'', \ppu, A})$ and $ (\inang{q'', a, \epsilon} \to \inang{q_2, \ppu, B})$. However, it is troublesome to adapt this approach to $k$-GSTSA since it is hard to ensure that the number of move/return operations would not exceed $k$. Therefore we choose another way to simulate top symbol observations.

Let $k$ be the maximal number of stack symbols which are observed in the transitions of the pushdown automaton. Then it has the transition of the following two forms, where $l$ is a natural number not greater than $k$:
$$
\begin{array}{rcl}
  \inang{q_1, a, A_{l+1}\ldots A_k} & \to & \inang{q_2, \ppu, B}\\
  \inang{q_1, a, A_{l+1}\ldots A_k} & \to & \inang{q_2, \ppo, A_k}\\ 
\end{array}
$$
Let $\Gamma$ be the stack of old stack symbols and $Q$ be the set of states. First, we enrich the set of stack symbols with $k$ new symbols $Z_1, \ldots Z_k$ which serve as bottom markers and treat them as elements of $\Gamma$. Then the new set of states is $Q' = \inbr{q_0, q_f} \cup Q \times \Gamma^n$ and the new stack alphabet is $\Gamma \times \Gamma^n$. $q_0$ and $q_f$ are distinguished initial and final states, respectively, and the second component of all other states contains the top $k$ symbols of the stack. Analogously, the second component of the stack symbol always keeps the $k$ symbols below it starting from the deepest. The symbols $Z_1, \ldots Z_k$ were added in order to ensure that there are always at least $k$ symbols in the stack. Then it is straightforward to simulate the dependence from $k$ top symbols by the means of the states only, the only difficulty is to maintain the invariant we announced.

The automaton always starts from the initial state $q_0$ and pushes to the stack the symbols $Z_1 \ldots Z_k$ moving to the state $(q_0, Z_1 \ldots Z_k)$ to start the computation. Every transition of the form $\inang{q_1, a, A_{l+1}\ldots A_k} \to \inang{q_2, \ppu, B}$ is simulated by a new transition 
$$
    \inang{(q_1, A_1 \ldots A_k), a} \to \inang{(q_2, A_2 \ldots A_k B), \ppu, (B, A_1 \ldots A_k)}.
$$
Note that the deepest of the symbols observed on the previous stage in the first component of the automaton state is now observed as the deepest symbol of the second component of the stack top. That allows us to update the top $k$ symbols when the \ppo operation is executed: every transition of the form $\inang{q_1, a, A_{l+1}\ldots A_k} \to \inang{q_2, \ppo, A_k}$ is replaced by the transition
$$
    \inang{(q_1, A_1 \ldots A_k), a} \to \inang{(q_2, A_0A_1 \ldots A_{k-1}), \ppo, (A_k, A_0 \ldots A_{k-1})}.
$$
It is straightforward to prove that the desired invariant is maintained. In the end of the computation we should remove the bottom markers, so we add the transitions of the form $\inang{(q, Z_1 \ldots Z_k), \epsilon} \to \inang{q_f, \ppo, Z_1 \ldots Z_k}$ (it is trivial to simulate immediate pop of $k$ symbols by successively removing them one by one so we simplify the notation) for all former final states $q$. Then it is easy to prove that the new automaton without lookup recognizes exactly the same language as the old automaton did.

Then the same approach can be applied to $k$-GSTSAs. The only modification to be made is to trace the contents of both the stacks, not the single one. So we have proved the following theorem:

\begin{theorem}
The generating power of $k$-GSTSAs is the same whether or not it is allowed to take into account the top $k$ symbols.
\end{theorem}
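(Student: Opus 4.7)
The plan is to reduce a $k$-GSTSA with lookup of the top $k$ symbols on each stack to an equivalent blind $k$-GSTSA, by lifting the single-stack argument sketched above for ordinary pushdown automata. The reverse inclusion holds trivially (a blind automaton is just a special case with vacuous lookup), so only this direction requires work. Throughout the construction I would also permit the derived \kke command, which is inessential since it was already shown to be simulable.

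First I would pad each stack with $k$ fresh bottom markers $Z_1, \ldots, Z_k$, pushed simultaneously on both stacks from a new initial state by $k$ successive \ppu commands with matching symbols on each side, and dually stripped off at the end via \ppo commands leading to a new final state $q_f$. This guarantees that every state visited during the real computation has at least $k$ symbols in each stack, so the top-$k$ windows are always well-defined. I would then take the new state set to be $Q' = \{q_0, q_f\} \cup Q \times \Gamma^k \times \Gamma^k$, whose two $\Gamma^k$ components record the current top $k$ symbols of stack one and stack two respectively, and the new stack alphabet to be $\Gamma \times \Gamma^k$, where every stacked symbol carries, as a tag, the $k$ symbols that sat immediately below it at the moment it arrived on that stack. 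The invariant to be preserved is that the state always mirrors the top $k$ of both stacks, and that the tag on every stacked symbol records the $k$-window underneath it at the time of its arrival.

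Next I would translate each of the four basic commands so as to maintain this invariant. A \ppu shifts the relevant window up, discards its deepest entry, and appends the new symbol; the newly pushed symbol receives as its tag the pre-push window read from the state. A \ppo uses the tag of the symbol being removed to restore the window in the state, and the guard of a lookup transition is implemented by simply matching against the state's window rather than the physical stack. The \mmo and \rre operations combine both behaviours on the two stacks: one stack shifts its window down using the tag of the popped symbol, while the other shifts its window up as in a \ppu, the newly arriving symbol on the target stack being tagged with the current window of that stack read off from the state; the counter on the symbol evolves according to the original GSTSA rules and remains visible because it is built into the stack symbol itself.

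The main obstacle is consistency rather than creativity: one has to verify that the tag stored on a symbol at the moment it was placed on a stack agrees with the state-window that will be current when that symbol is eventually popped, moved, or returned, and this must be checked across \mmo/\rre pairs that shuttle the same symbol between the two stacks. The argument is a routine induction on the length of the computation, using the fact that every intermediate command shifts its windows by exactly one position and updates the tag of any newly placed symbol from the live state components. Once the invariant is in hand, accepting runs of the original lookup automaton and of the constructed blind automaton correspond bijectively and read the same input, so the recognized languages coincide, establishing the theorem.
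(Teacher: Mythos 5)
Your construction is correct and is essentially the one the paper gives: bottom markers $Z_1,\ldots,Z_k$, states enriched with the top-$k$ windows, stack symbols tagged with the $k$ symbols beneath them, and window updates on \ppu/\ppo, with the two-stack case handled by tracking both windows. The paper only sketches the lift from the single-stack case to GSTSAs, whereas you spell out how \mmo and \rre act as a pop on one window and a push on the other, but the underlying idea and invariant are the same.
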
 

\section{Conclusions and future work}

We give the algebraic interpretation of Chomsky-Sch\"{u}tzenberger theorem for the class of displacement-context free languages which are another realization of well-nested multiple context-free languages. We present their characterization in terms of monoid automata. Then we introduce the computational interpretation of the introduced monoid, showing how the multiplication operation of the monoid can be simulated on two stacks by specific combinations of \ppu and \ppo operations. The flexibility of the introduced notion of two-stack automata shows the vitality of the our approach.

There are at least two directions of future work: the first is two develop fast analyzers for the class of DCFGs or for a significant subclass of them. For example, it is interesting to adopt the machinery of LR or Earley algorithms for DCFLs (see \cite{Kanazawa2008} for the variant of Earley analyzer for well-nested MCFGs). The other direction is the further investigation of underlying algebraic structure. The most straightforward question is to provide the same characterization in terms of monoids for the variants of generalized STSAs as it is done for simple STSAs. Also it is interesting to answer, whether the $\epsilon$-moves are redundant, like it was done by Zetzsche for automata based on graph products of polycyclic monoids (\cite{Zetzsche2013}).


\begin{thebibliography}{}
\bibitem{Berstel} Jean~Berstel. Context-free languages and rational transductions. Teubner Verlag, Stuttgart, 1979.
\bibitem{Chomsky} Noam~Chomsky, Marcel-Paul~Sch\"{u}tzenberger. The algebraic theory of context-free languages. In: Computer programming and formal systems, 118–161. North-Holland, Amsterdam, 1963.
\bibitem{JoshiSchabes} Aravind Joshi and Yves Schabes. Tree-adjoining grammars. In G. Rozenberg and A. Salomaa, eds., Handbook of Formal Languages, Vol. 3: Beyond Words, 69--123. Springer, Berlin, 1997.
\bibitem{Kambites} Mark Kambites. Formal languages and groups as memory. Communications in algebra, 37(1) (2009) 193-208. arXiv:math/0601061.
\bibitem{Lallement} Gerard Lallement. Semigroups and combinatorial applications. John Wiley \& Sons, New York. 1979.
\bibitem{Kanazawa2008} Makoto Kanazawa. A prefix-correct earley recognizer for multiple context-free grammars. Proceedings of the Ninth International Workshop on Tree Adjoining Grammars and Related Formalisms (TAG+ 9). 2008.
\bibitem{Kanazawa2009} Makoto Kanazawa. The pumping lemma for well-nested multiple context-free languages. Developments in Language Theory, Lecture Notes in Computer Science, 5583 (2009) 312-325.
\bibitem{Kanazawa2011} Makoto Kanazawa, Sylvain Salvati. MIX is not a tree-adjoining language. Proceedings of the 50th Annual Meeting of the Association for Computational Linguistics (2012), 666-674.
\bibitem{NivatPerrot} Maurice Nivat, Jean-Fran\c{c}ois Perrot. Une g\'{e}n\'{e}ralisation du mono\"{i}de bicyclique, C.R. Acad. Sci. Paris, 271 (1970) 824-827.
\bibitem{Pollard1984} Carl Pollard. Generalized Phrase Structure Grammars, Head Grammars, and Natural Languages. Ph.D. thesis, Stanford University, Stanford, 1984.
\bibitem{Roach1987} Kelly Roach. Formal Properties of Head Grammars. In: Mathematics of Language, 293-349, John Benjamins, Amsterdam, 1987.
\bibitem{Salomaa} Grzegorz Rozenberg and Arto Salomaa, Eds., Handbook of formal languages, Vol.~1: Word, Language, Grammar,  Springer, Berlin, 1996.
\bibitem{Seki} Hiroyuki Seki, Takashi Matsumoto, Mamoru Fujii and Tadao Kasami. On multiple context-free grammars. Theoretical Computer Science 88(2) (1991) 191-229.
\bibitem{Sorokin} Alexey Sorokin. Normal forms for multiple context-free languages and displacement Lambek grammars. Logical Foundations of Computer Science, Lecture Notes in Computer Science, 7734 (2013), 319-334.
\bibitem{Vijay-SchankerWeirJoshi1986} Aravind Joshi, K. Vijay-Shanker and David Weir. Tree adjoining and head wrapping. Proceedings of the 11th Conference on Computational Linguistics (1986), 202-207.
\bibitem{Yoshinaka} Rio Yoshinaka, Yuichi Kaji, Hiroyuki Seki. Chomsky-Sch\"{u}tzenberger-Type Characterization of Multiple Context-Free Languages. Language and Automata Theory and Applications, Lecture Notes in Computer Science, 6031 (2010) 596-607.
\bibitem{Zetzsche2011} Georg Zetzsche. On the capabilities of grammars, automata, and transducers controlled by monoids. In: Automata, Languages and Programming, 222-233. Springer, Berlin---Heidelberg, 2011.
\bibitem{Zetzsche2013} Phoebe Buckheister, Georg Zetzsche. Semilinearity and Context-Freeness of Languages Accepted by Valence Automata. arXiv:1306.3260 [cs.FL]
\end{thebibliography}
\end{document}